\newtheorem{theorem}{Theorem}%  meant for continuous numbers
\newtheorem{lemma}[theorem]{Lemma}
\newtheorem{corollary}[theorem]{Corollary}
\newtheorem{definition}{Definition}%
\renewcommand{\Re}{\operatorname{Re}}
\renewcommand{\Im}{\operatorname{Im}}
\DeclareMathOperator{\Tr}{Tr}
\def\BibTeX{{\rm B\kern-.05em{\sc i\kern-.025em b}\kern-.08em
    T\kern-.1667em\lower.7ex\hbox{E}\kern-.125emX}}
\begin{document}
\title{Universal 2-Local Symmetry-Preserving Quantum Neural Networks for Fermionic Systems}
\author{Ge Yan \textit{Member, IEEE}, Kaisen Pan, Ruocheng Wang, Mengfei Ran, Hongxu Chen, Junchi Yan \textit{Senior Member, IEEE }
\thanks{G. Yan is with Nanyang Technological University, Singapore. The other authors are with the School of Artificial Intelligence, Shanghai Jiao Tong University, Shanghai 200030, China (Correspondence author: Junchi Yan, Email: yanjunchi@sjtu.edu.cn).}}

\markboth{Universal 2-Local Symmetry-Preserving Quantum Neural Networks for Fermionic Systems}{}

% {IEEE Transactions on Pattern Analysis and Machine Intelligence,~Vol.~, No.~, September~2025}%

\maketitle

\begin{abstract}
Simulating quantum many-body systems represents a fundamental challenge where classical machine learning methods are severely bottlenecked by the exponential curse of dimensionality. Variational Quantum Algorithms (VQAs) offer a native paradigm to tackle this by optimizing parameterized unitary evolutions to find the ground states of problem Hamiltonians. However, the efficacy of these VQA is deeply hindered by the challenge of balancing the preservation of critical physical symmetries with the strict constraints of hardware implementability. In this work, we address this dilemma by proposing a hardware-efficient, symmetry-preserving ansatz fortified with complete theoretical guarantees for fermionic systems, termed the Hamming Weight Preserving (HWP) ansatz. We establish the necessary and sufficient conditions for 2-local HWP operators to achieve subspace universality, formally debunking the prevailing assumption that truncation-free simulation requires complex high-order interactions. Empirical validations corroborate our theoretical guarantees, showcasing the exact approximation of arbitrary unitary matrices within the HWP subspace. 
Crucially, we demonstrate the exceptional versatility of the proposed approach by deploying the exact same ansatz across distinct fermionic models, including diverse molecular electronic structures and the Fermi-Hubbard model. Our proposed HWP ansatz consistently suppresses ground-state energy errors below $1 \times 10^{-10}$ Ha, achieving a level of precision that surpasses the stringent threshold of chemical accuracy by multiple orders of magnitude. This work establishes a complete, theoretically fortified 2-local framework for symmetry-preserving computation, offering a highly universal and hardware-efficient building block for advancing quantum machine learning and fermionic many-body simulations.
\end{abstract}

\begin{IEEEkeywords}
Quantum Machine Learning, Quantum Many-body Simulation, Lie Algebra
\end{IEEEkeywords}

\section{Introduction}
\IEEEPARstart{S}{imulating} quantum many-body systems is the long-standing fundamental challenge across condensed matter physics~\cite{leblanc2015solutions}, quantum chemistry~\cite{helgaker2013molecular,cao2019quantum} and materials science~\cite{kotliar2006electronic}, as classical methods are bottlenecked by the curse of dimensionality~\cite{feynman2018simulating,kohn1999nobel}. Given the intrinsic quantum mechanical nature, quantum computers offer a native and mathematically robust paradigm for tackling these simulations~\cite{feynman2018simulating}. Within this domain, Quantum Machine Learning (QML)~\cite{biamonte2017quantum,cerezo2021variational}, particularly the Variational Quantum Eigensolver (VQE)~\cite{peruzzo2014variational}, has emerged as the premier algorithm, demonstrating significant potential for achieving practical utility on both near-term and Fault-tolerant devices~\cite{kandala2017hardware}. Operating as an ab initio approach~\cite{helgaker2013molecular}, VQE initializes from a trivial reference state and variationally optimizes a parameterized trial wavefunction (ansatz) to approximate the ground state, thereby extracting critical properties of the many-body system. Therefore, the efficacy of simulating many-body system with VQE is deeply constrained by the design of the variational ansatz~\cite{cerezo2021variational}.

There have been progressive improvements in variational ansatz design, including hardware-efficient~\cite{kandala2017hardware} and Hamiltonian variational types~\cite{romero1701strategies,stanisic2022observing}. Hardware-Efficient Ansätze (HEA) accommodate near-term devices but blindly explore the unconstrained Hilbert space. They suffer from degraded accuracy and severe barren plateaus since they ignore physical symmetries. Conversely, Hamiltonian-driven frameworks like Unitary Coupled Cluster (UCC)~\cite{anand2022sukin,bartlett1989alternative} inherently preserve symmetries but necessitate high-order, non-local interactions. Compiling these complex terms triggers an explosive growth in circuit depth, rendering them intractable even for anticipated fault-tolerant quantum computing  architectures.

To navigate the trade-off between hardware efficiency and physical meaning, a more robust approach is to leverage fundamental physical symmetries, which mathematically constrain the state evolution to dynamically isolated subspaces~\cite{gard2020efficient}. For fermionic systems, critical symmetries, such as particle-number conservation, naturally map to the preservation of the quantum-state Hamming weight under standard qubit encodings~\cite{Jordan1928,bravyi2002fermionic}. Existing symmetry-preserving circuit designs predominantly follow a bottom-up strategy, where heuristic operators are derived directly from the symmetry-preserving system Hamiltonian~\cite{wecker2015solving,jiang2018quantum,bacon2001encoded,terhal2002classical,farhi2014quantum,stanisic2022observing,gard2020efficient}. However, this fragmented approach lacks any rigorous theoretical guarantee regarding expressivity~\cite{zeier2011symmetry,larocca2023theory}. It remains unknown whether circuits composed of these ad-hoc operators can universally explore the subspace to reach the target ground state. Furthermore, it is conventionally believed that achieving truncation-free simulation of strongly correlated many-body systems strictly necessitates non-local, high-order interaction terms \cite{romero1701strategies,anand2022sukin}. However, compiling such complex operators imposes prohibitive circuit depth overheads, constituting a major bottleneck for the practical deployment of VQEs \cite{tilly2022variational}. Therefore, discovering a hardware-efficient, symmetry-preserving ansatz fortified with strict theoretical guarantees remains a central and highly practical challenge in QML.

In this work, we address this critical gap by presenting a novel top-down framework for constructing mathematically interpretable Hamming Weight Preserving (HWP) ansätze. Diverging from prior bottom-up methods that derive scattered gates from specific Hamiltonians, our paradigm systematically delineates the complete mathematical structure of the HWP space. From this universal foundation, we rigorously deduce the necessary and sufficient conditions required for 2-local operators to achieve complete controllability over the constrained subspace. This top-down derivation renders our ansatz entirely independent of any specific system Hamiltonian, effectively providing a universal, truncation-free building block capable of simulating diverse fermionic many-body systems without resorting to complex high-order terms.

Specifically, we first derive the necessary and sufficient conditions for 2-local HWP operators to achieve universality in the HWP subspace by analyzing the Lie Algebra dimension, ensuring maximum expressivity. We then establish the trainability of the ansatz by deriving exact gradient variance expressions \citep{mcclean2018barren, wang2021noise}. This theoretical analysis rigorously quantifies the optimization landscape, demonstrating that the gradient vanishing is strictly bounded by the physical subspace dimension rather than the full unconstrained Hilbert space. The theoretical findings are validated by successfully approximating arbitrary unitary matrices within the HWP subspace \citep{nielsen2002quantum, chong2017programming, khatri2019quantum}, demonstrating the ansatz's ability to efficiently capture target quantum states or transformations. To further showcase the practical utility of the proposed ansatz, we evaluate on representative Fermionic systems~\citep{kandala2017hardware, guo2024experimental, hubbard1963electron}, including solving electronic structures for molecules and the Fermi-Hubbard model. The proposed ansatz achieves remarkable accuracy on both models and suppresses the Hamiltonian variational ansatz in terms of both efficiency and accuracy.

\section{Preliminary}
\subsection{Hamming Distance and Hamming Weight}
Considering two binary vectors $\mathbf{a}$ and $\mathbf{b}$ with $\mathbf{a}, \mathbf{b}\in \{0,1\}^{N}$, where $N$ is the dimension of the vectors. The Hamming distance of these two vectors is as follows:
\begin{definition}\label{def:hammingdistance}
\textbf{Hamming distance:} The Hamming distance $\mathcal{D}$ of two binary vectors $\mathbf{a}$ and $\mathbf{b}$ is:
\begin{equation}
\begin{aligned}
    \mathbf{c}=\mathbf{a} \oplus \mathbf{b},\\
    \mathcal{D}(\mathbf{a},\mathbf{b})=\sum_{i=1}^N \mathbf{c}_i,
\end{aligned}
\end{equation}
where $\oplus$ stands for exclusive OR operator.
\end{definition}
With the definition of Hamming distance, we can further define the Hamming weight of a given binary vector $\mathbf{a}$.
\begin{definition}\label{def:hammingweight}
\textbf{Hamming weight:} Let $\mathbf{0}=0^N$. The Hamming weight $HW$ of binary vector $\mathbf{a}$ is:
\begin{equation}
    HW(\mathbf{a})=\mathcal{D}(\mathbf{a},\mathbf{0}),
\end{equation}
\end{definition}

\subsection{Lie Algebra and Controllability of Quantum Systems}
We first introduce a well-defined mathematical tool to derive the theoretical findings, which is Dynamical Lie Algebra (DLA)~\citep{zeier2011symmetry,d2021introduction}. Lie algebraic techniques have been widely used in discussing the controllability of quantum systems due to the unitary transformation nature of the quantum circuits. For a more intuitive understanding, the Dynamical Lie Group $\mathbb{G}$ comprises unitary matrices, and the corresponding DLA $\mathfrak{g}$ comprises anti-Hermitian matrices. For a $L$-gate parameterized quantum circuit, we can denote the ansatz as a unitary transformation as
\begin{equation}\label{eq:PQC}
    \mathbf{U}(\bm{\theta})=\prod_{l=1}^L \mathbf{U}_l({\theta}_l)=\prod_{l=1}^L e^{\text{i}\theta_{l}\mathbf{H}_l},
\end{equation}
where $\mathbf{U}_l=e^{\text{i}\mathbf{H}_l\theta_l}$, $\text{i}\mathbf{H}_l$ are anti-Hermitian matrices, and $\bm\theta=\{\theta_1,\theta_2\cdots\theta_L\}$ are the parameters. We take these distinct Hamiltonians in the circuit as a set of generators $\mathcal{G}=\{\text{i}\mathbf{H}_p\}_{p=1}^P$, where $|\mathcal{G}|=P$. The DLA can be defined as
\begin{definition}\label{def:DLA}
\textbf{Dynamical Lie Algebra (DLA):} Consider the set of generators $\mathcal{G}=\{\text{i}\mathbf{H}_p\}_{p=1}^P$, the DLA $\mathfrak{g}$ is defined as:
\begin{equation}
    \mathfrak{g}=span\langle \rm{i}\mathbf{H}_1, \rm{i}\mathbf{H}_2,\cdots,\rm{i}\mathbf{H}_P\rangle_{Lie},
\end{equation}
where $\langle\cdot\rangle_{Lie}$ denotes the Lie closure. 
\end{definition}
The DLA $\mathfrak{g}$ is calculated by repeatedly taking the commutator of the elements in the set of generators. The commutator of matrices $\mathbf{A}$ and $\mathbf{B}$ can be defined as $[\mathbf{A},\mathbf{B}]=\mathbf{AB}-\mathbf{BA}$. 
The reachable unitary matrices of the parameterized quantum ansatz with arbitrary parameters $\bm\theta$ can be denoted as $\{\mathbf{U}(\bm{\theta})\}_{\bm{\theta}}$. We introduce the following lemma
\begin{lemma}\label{lemma:controllability}
    ~\cite{ramakrishna1995controllability}  A quantum system $\hat{\mathbf{H}}$ is completely controllable if $\{\mathbf{U}(\bm{\theta})\}_{\bm{\theta}}= \mathbb{G}=\mathcal{SU}(N)$.
\end{lemma}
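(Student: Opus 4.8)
The plan is to read this cited controllability criterion as an identification between the reachable set of the parameterized ansatz and the connected Lie group generated by its DLA, and then to specialize to the case where that group exhausts the target. First I would recast the ansatz of Eq.~\ref{eq:PQC} as a right-invariant control system on $\mathcal{SU}(N)$: writing $\dot{\mathbf U}(t)=\text{i}\mathbf{H}(t)\mathbf{U}(t)$ with $\mathbf{U}(0)=\mathbf{I}$ and letting $\mathbf{H}(t)$ switch among the generators in $\mathcal{G}=\{\text{i}\mathbf{H}_p\}_{p=1}^P$ for piecewise-constant durations set by the $\theta_l$, the endpoint of the trajectory is exactly the finite product $\prod_{l=1}^{L}e^{\text{i}\theta_l\mathbf{H}_l}$. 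Hence the reachable set from the identity is precisely $\{\mathbf{U}(\bm\theta)\}_{\bm\theta}$.

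Second, I would invoke the fundamental reachability theorem for right-invariant systems on Lie groups (Jurdjevic--Sussmann, which underlies the cited result): the reachable set from $\mathbf{I}$ equals the connected Lie subgroup $\mathbb{G}\subseteq\mathcal{SU}(N)$ whose Lie algebra is the Lie closure $\mathfrak{g}=\langle\text{i}\mathbf{H}_1,\dots,\text{i}\mathbf{H}_P\rangle_{\mathrm{Lie}}$, i.e.\ the DLA of Definition~\ref{def:DLA}. The heart of this step is that products of one-parameter subgroups realize not only the directions $\text{i}\mathbf{H}_l$ but, through the Baker--Campbell--Hausdorff expansion of conjugates such as $e^{\text{i}\theta\mathbf{H}_p}e^{\text{i}\phi\mathbf{H}_q}e^{-\text{i}\theta\mathbf{H}_p}$, every iterated commutator $[\text{i}\mathbf{H}_p,\text{i}\mathbf{H}_q]$ as well, so that the group they generate is exactly $\exp(\mathfrak{g})$.

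Third, I would close the argument using compactness and connectedness of $\mathcal{SU}(N)$. Since $\mathcal{SU}(N)$ is compact, the connected subgroup $\exp(\mathfrak{g})$ is already closed, so the reachable set requires no topological closure and coincides with $\mathbb{G}$. Therefore $\{\mathbf{U}(\bm\theta)\}_{\bm\theta}=\mathbb{G}=\mathcal{SU}(N)$ (up to an irrelevant global phase, which the loss of Eq.~\ref{eq:UAloss} quotients out) holds if and only if $\mathfrak{g}\supseteq\mathfrak{su}(N)$. Because the HWP generators include the trace-carrying $\mathbf{E}_{ij}$, the full Lie closure is in fact $\mathfrak{u}(N)$ of dimension $N^2$, so the condition becomes the algebraic rank condition $\dim\mathfrak{g}=N^2=d_k^2$---precisely the dimension count that Theorem~\ref{thm:main} and Corollary~\ref{coro:NN} verify with $N=d_k$.

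The main obstacle is the second step: rigorously showing that the \emph{finite} products of single-generator exponentials fill the entire connected group $\exp(\mathfrak{g})$ rather than merely a dense or lower-dimensional subset. The delicate points are (i) controlling the higher-order Baker--Campbell--Hausdorff remainder so that each commutator direction is genuinely attained and not only approached, and (ii) accommodating that the $\theta_l$ are real durations rather than freely signed group parameters; here one exploits that on a compact group every one-parameter subgroup is periodic, so both signs of each generator are reachable, and that Lie closure together with connectedness upgrades local surjectivity near $\mathbf{I}$ to global surjectivity onto $\mathbb{G}$.
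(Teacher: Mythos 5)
The paper does not prove this statement at all: Lemma~\ref{lemma:controllability} is quoted verbatim from \cite{ramakrishna1995controllability} and functions essentially as the \emph{definition} of complete controllability --- a system is called completely controllable when its reachable set $\{\mathbf{U}(\bm{\theta})\}_{\bm{\theta}}$ exhausts $\mathcal{SU}(N)$. What you have written is instead a proof sketch of the deeper reachability theorem that underlies Lemma~\ref{lemma:control} (the rank condition $\dim\mathfrak{g}=N^2$), namely the Jurdjevic--Sussmann identification of the reachable set of a right-invariant system with the connected subgroup $\exp(\mathfrak{g})$. That is a coherent and standard route, and your handling of the trace-carrying generator $\mathbf{E}_{ij}$ (so that the relevant algebra is $\mathfrak{u}(d_k)$ of dimension $d_k^2$ rather than $\mathfrak{su}(d_k)$) is a point the paper itself glosses over. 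So the comparison here is not ``same proof vs.\ different proof'' but ``citation vs.\ actual argument'': you are supplying content the paper delegates entirely to the reference.

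One concrete flaw in your justification: the claim that ``on a compact group every one-parameter subgroup is periodic'' is false. Take $e^{\mathrm{i}tH}\in\mathcal{SU}(N)$ with $H$ having eigenvalue gaps in irrational ratio; the orbit winds densely in a maximal torus and never closes. Consequently you cannot recover $e^{-\mathrm{i}\theta\mathbf{H}_l}$ exactly from positive durations this way --- only from the \emph{closure} of the forward orbit. The correct repair, which is the actual mechanism in Jurdjevic--Sussmann for compact groups, is that the closure of the positive-time attainable semigroup is a closed subsemigroup of a compact group and hence a group, and a separate argument (interior points via the full rank of $\mathfrak{g}$) upgrades the closure statement to exact attainability. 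Your step (ii) should be rewritten around that fact rather than around periodicity; as written it would fail for generic generators.
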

Complete controllability indicates that any desired quantum state evolution can be achieved using the ansatz. In other words, the system's dynamics can be fully manipulated to reach any state in the given space from any initial state. We further borrow an important conclusion from~\cite{ramakrishna1995controllability}, which provides a simple way to verify the controllability of a quantum system, by computing the dimension of the algebra.
\begin{lemma}\label{lemma:control}
~\cite{ramakrishna1995controllability} A necessary and sufficient condition for complete controllability of a quantum system $\hat{\mathbf{H}}$ is that the dimension of the DLA $\mathfrak{g}$ is $N^2$. 
\end{lemma}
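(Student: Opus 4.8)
The plan is to deduce the statement from the fundamental correspondence between connected Lie subgroups of the unitary group and Lie subalgebras of $\mathfrak{u}(N)$, so that the purely group-theoretic notion of complete controllability is converted into a dimension count on the DLA. Throughout, note that the generators $\{\text{i}\mathbf{H}_p\}$ are anti-Hermitian and hence lie in $\mathfrak{u}(N)$, the $N^2$-dimensional real Lie algebra of all anti-Hermitian matrices; by Definition~\ref{def:DLA} the DLA $\mathfrak{g}$ is their Lie closure, so $\mathfrak{g}$ is a Lie subalgebra with $\mathfrak{g}\subseteq\mathfrak{u}(N)$.

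First I would verify that the reachable set $\{\mathbf{U}(\bm\theta)\}_{\bm\theta}$ is a subgroup of $\mathcal{U}(N)$: products of the form $\prod_l e^{\text{i}\theta_l\mathbf{H}_l}$ are closed under composition (by concatenating parameter lists) and under inversion (by reversing the order and negating each $\theta_l$), while the identity is recovered at $\bm\theta=0$. Being a subgroup generated by smooth one-parameter families, it is a connected immersed Lie subgroup $\mathbb{G}\subseteq\mathcal{U}(N)$ by the orbit theorem of geometric control theory.

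The crux is to identify the Lie algebra of $\mathbb{G}$ with the DLA $\mathfrak{g}$. The containment $\operatorname{Lie}(\mathbb{G})\supseteq\mathfrak{g}$ follows from the commutator relation $e^{t X}e^{s Y}e^{-t X}e^{-s Y}=e^{ts[X,Y]+O(t^2 s^2)}$: each one-parameter group $e^{\text{i}\theta\mathbf{H}_p}$ lies in $\mathbb{G}$, so each $\text{i}\mathbf{H}_p$ is tangent to $\mathbb{G}$ at the identity, and the relation shows every iterated commutator of generators is again tangent to $\mathbb{G}$; since $\mathfrak{g}$ is spanned by such commutators, $\mathfrak{g}\subseteq\operatorname{Lie}(\mathbb{G})$. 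Conversely $\operatorname{Lie}(\mathbb{G})\subseteq\mathfrak{g}$, because the tangent space along any product of exponentials of generators is assembled from $\operatorname{Ad}$-translates of the generators, all of which remain inside the closed subalgebra $\mathfrak{g}$; this direction is the substantive content of the Jurdjevic--Sussmann/orbit theorems and is where the real work sits. Combining the two inclusions gives $\operatorname{Lie}(\mathbb{G})=\mathfrak{g}$ and hence $\dim\mathbb{G}=\dim\mathfrak{g}$.

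With this correspondence in hand the equivalence is immediate. Since $\mathfrak{g}\subseteq\mathfrak{u}(N)$ and $\dim\mathfrak{u}(N)=N^2$, a dimension count gives $\mathfrak{g}=\mathfrak{u}(N)$ if and only if $\dim\mathfrak{g}=N^2$; and because $\mathbb{G}$ is connected and the exponential map is surjective onto the compact connected group $\mathcal{U}(N)$, the equality $\mathfrak{g}=\mathfrak{u}(N)$ is in turn equivalent to $\mathbb{G}=\mathcal{U}(N)$, i.e.\ to complete controllability (with $\mathbb{G}\supseteq\mathcal{SU}(N)$, the residual global phase being physically irrelevant). Reading the chain left to right yields sufficiency and right to left yields necessity. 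I expect the single genuine obstacle to be the reverse containment $\operatorname{Lie}(\mathbb{G})\subseteq\mathfrak{g}$, which genuinely needs the orbit theorem rather than an elementary Baker--Campbell--Hausdorff manipulation; once that identification is granted, the remaining steps are formal.
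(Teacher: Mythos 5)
This lemma is not proved in the paper at all: it is imported verbatim from the cited reference \cite{ramakrishna1995controllability}, so there is no in-paper argument to compare against. Your sketch is the standard proof that underlies that citation --- the orbit/Jurdjevic--Sussmann theorem identifying the reachable set with the connected Lie subgroup whose Lie algebra is the Lie closure of the generators, followed by the dimension count inside $\mathfrak{u}(N)$ --- and it is sound, including your correct identification of the reverse inclusion $\operatorname{Lie}(\mathbb{G})\subseteq\mathfrak{g}$ as the only nontrivial step. Two small remarks. First, your subgroup argument implicitly requires words of arbitrary length in the generators (the paper's Eq.~\ref{eq:PQC} fixes $L$); for the controllability statement one must indeed take the union over all depths, as you do by ``concatenating parameter lists,'' and it is worth saying so explicitly. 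Second, you rightly flag the tension between the paper's Lemma~\ref{lemma:controllability}, which phrases controllability as $\mathbb{G}=\mathcal{SU}(N)$ (whose algebra has dimension $N^2-1$), and the present lemma's criterion $\dim\mathfrak{g}=N^2$ (which corresponds to $\mathfrak{u}(N)$); your resolution via the physically irrelevant global phase is the standard one and matches how the cited reference distinguishes the two cases, so this is a defect of the paper's phrasing rather than of your argument.
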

This lemma is then utilized to derive the conditions for universal HWP operators in the theoretical results.

\begin{figure*}
    \centering
    \includegraphics[width=\linewidth]{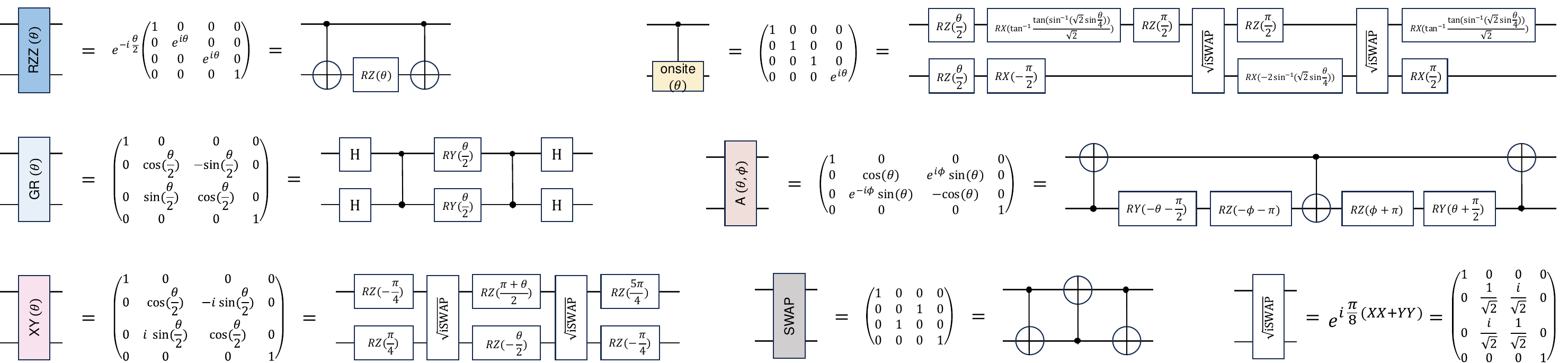}
    \caption{\textbf{Two-qubit HWP gates in related works with their unitary matrices and possible decomposition.} The two-qubit elementary gates utilized in the decomposition include CX, CZ, and $\sqrt{\text{iSWAP}}$ (unitary matrix in the bottom right). We have RZZ~\cite{farhi2014quantum}, Givens Rotations~\cite{wecker2015solving,jiang2018quantum}, XY-interaction~\cite{bacon2001encoded,terhal2002classical}, onsite~\cite{stanisic2022observing}, $A$ gate~\cite{gard2020efficient}, and non-parameterized SWAP gate. }
    \label{fig:HWPgates}
\end{figure*}
\subsection{Hamming Weight Preserving Operators}
Based on the definition of Hamming weight and Hamming distance, we can thus define a type of quantum circuit, namely Hamming Weight Preserving quantum circuit, where the number of non-zero elements in the quantum state remains invariant throughout its evolution. This type of circuit naturally enforces symmetry within the model by treating it as a hard constraint on the circuit. To ensure that the quantum circuit preserves the Hamming weight of the states, the Hamiltonian of two-qubit HWP gates takes the following form:
\begin{equation}\label{eq:methodHW}\small
    \mathbf{H}_{HW}= \left(\begin{array}{cccc}
        0 & 0 & 0 & 0\\
        0 & a & b & 0\\
        0 & \bar{b} & c & 0\\
        0 & 0 & 0 & 0
    \end{array}\right),
\end{equation}
where $a$ and $c$ are real values and $\bar{b}$ denotes the complex conjugate of $b$. 
Two-qubit HWP gates with this structure act exclusively on the basis states $\ket{01}$ and $\ket{10}$, thereby maintaining the Hamming weight of the quantum state throughout the evolution. This characteristic not only enforces particle number symmetry but also provides resilience against bit-flip errors, as any such error would alter the Hamming weight, making it easily detectable with a simple parity check. Furthermore, incorporating symmetry verification as an additional layer can enhance the robustness and performance of the HWP ansatz on near-term devices.
In Figure~\ref{fig:HWPgates}, we provide several two-qubit gates that have been identified as HWP gates in previous works~\cite{wecker2015solving,jiang2018quantum,bacon2001encoded,terhal2002classical,farhi2014quantum,stanisic2022observing,gard2020efficient}, fitting the form in Equation~\ref{eq:methodHW}. However, none of these gates have been rigorously analyzed in terms of their expressivity, and their performance has been evaluated only on specific tasks with limited ansatz layers, leaving them underparameterized. 

Quantum computing is inherently susceptible to noise, which poses significant challenges for reliable computation. Extensive research has been devoted to quantum error mitigation (QEM)~\cite{huang2023near,cai2023quantum} and quantum error correction~\cite{fowler2012surface,kovalev2013fault,breuckmann2021quantum,bausch2024learning}. While these efforts are essential for universal quantum computing, some works have also focused on developing quantum ansatz that are self-protected or resilient to specific types of errors~\cite{viola2001experimental,pyshkin2022self}. Such approaches are particularly valuable for the near-term advancement of quantum processors. The HWP ansatz explored in this paper is inherently resistant to bit-flip errors and is well-suited for integration with QEM techniques, such as symmetry verification~\cite{yan2024rethinking}. This inherent compatibility enhances its practical utility, simplifying error mitigation and correction in near-term quantum devices.

\section{Related Works}
We first review existing QML~\cite{biamonte2017quantum} algorithms for simulating fermionic systems. A foundational approach is the Hamiltonian variational ansatz~\cite{wecker2015solving}, which is designed to construct a physically rigorous trial wavefunction. Since the problem Hamiltonian, which encodes the total energy of the many-body system, is Hermitian, its exponentiation yields unitary evolution operators $U=e^{-iHt}$ that can be seamlessly implemented on quantum circuits, with time $t$ serving as the variational parameter. Consequently, the variational layers are constructed by alternately applying the constituent terms of the problem Hamiltonian to form the complete ansatz. However, the implementation complexity of this approach is strictly bottlenecked by the complexity of the Hamiltonian terms themselves~\cite{anand2022sukin,guo2024experimental}. For strongly correlated systems, the problem Hamiltonian frequently contains high-order, non-local interaction terms that are notoriously difficult to decompose into the basic built-in gates available on quantum hardware~\cite{ryabinkin2018qubit}. To demonstrate proof-of-principle advantages on near-term quantum processors, the Hardware-Efficient Ansatz (HEA) was proposed~\cite{kandala2017hardware}. Constructed exclusively from native, parameterized single-qubit rotations (e.g., $R_Y, R_Z$) and entangling CNOT gates, the HEA prioritizes executability. Yet, because it does not enforce physical symmetries of any kind, the HEA blindly explores the entire exponentially large Hilbert space. This unrestricted exploration not only risks converging to unphysical final states but also exacerbates severe barren plateaus during the training phase~\cite{cerezo2021cost}.

To mitigate these structural design challenges, various machine learning techniques have been introduced to dynamically optimize the ansatz architecture. A prominent paradigm is Quantum Architecture Search (QAS)~\cite{yan2024quantum}, which draws inspiration from classical neural architecture search by simultaneously optimizing both the circuit topology and its parameterized gates using a unified loss function. Nevertheless, most QAS methods adhere to the HEA design philosophy, utilizing native hardware gates as the candidate pool, to minimize CNOT counts for near-term implementation~\cite{du2022quantum,wu2023quantumdarts}. As a result, they inherently fail to generate symmetry-preserving ansätze. Conversely, adaptive algorithms such as ADAPT-VQE~\cite{grimsley2019adaptive,tang2021qubit} populate their candidate gate pool with Hamiltonian variational operators. By dynamically adjusting the selection and application order of these operators, ADAPT-VQE can reduce the total number of required gates while strictly preserving system symmetries. However, the primary obstacle impeding Hamiltonian-driven ansätze is not merely the quantity of stacked operators, but the intrinsic implementational complexity of the high-order operators themselves. Thus, an optimal balance between physical validity and hardware implementability remains elusive.

Parallel to architectural innovations, continuous efforts have been directed toward rigorously understanding the training dynamics of QML algorithms. The fundamental connection between quantum optimal control and Lie algebra was formally established in~\cite{ramakrishna1995controllability}, laying the groundwork for analyzing quantum system controllability. Building upon this, \cite{du2020expressive} elaborated on the expressivity of quantum ansätze, while \cite{larocca2023theory} investigated the theory of overparameterization by analyzing the Quantum Fisher Information Matrix and DLA. Concurrently, the geometry of the optimization landscape, specifically the barren plateau phenomenon, has remained a major bottleneck and a subject of intense scrutiny~\cite{larocca2025barren}. To be both practically viable and mathematically sound, a robust quantum ansatz must be rigorously evaluated against all these theoretical criteria: subspace expressivity, trainability, and hardware-efficient gate complexity.

\textit{Remarks on Concurrent and Preliminary Works:} It is worth noting a concurrent study~\cite{monbroussou2025trainability} that also explores the DLA of Hamming-weight preserving circuits. However, their framework fundamentally diverges from ours in both application and algebraic completeness. They primarily deploy the HWP ansatz as a quantum data loader for classical data embedding. This application is highly counterintuitive, as classical data distributions rarely possess intrinsic, strict Hamming-weight symmetries, rendering the subspace constraint physically unmotivated. Furthermore, their analysis heavily relies on the Givens Rotation (GR) gate, which strictly confines the system's evolution to real-valued orthogonal sub-manifolds. This structural deficit fundamentally prevents their ansatz from spanning the complete special unitary group $SU(d_k)$ required for absolute many-body simulation. Meanwhile, in our preliminary conference version~\cite{yan2024HWP}, we initially identified the alignment between the HWP ansatz and constrained VQAs. This journal manuscript significantly generalizes and formally completes that framework. Diverging from heuristic observations, we now provide the rigorous top-down algebraic derivation of the 2-local BS gate, formal proofs for necessary and sufficient conditions under restrictive Nearest-Neighbor topologies, exact gradient variance bounds characterizing trainability, and extensive scalability validations on strongly correlated Fermi-Hubbard systems.

\section{Methodology}
\subsection{Theoretical Results for Universality}
We begin by establishing the necessary and sufficient conditions for an HWP ansatz to be universal within the HWP subspace. By leveraging quantum optimal control theory~\cite{ramakrishna1995controllability}, we identify these conditions through the analysis of the Dynamical Lie Algebra (DLA) generated by the HWP ansatz.
The general form of a 2-local HWP gate in Equation~\ref{eq:methodHW} can be decomposed with the following four basis matrices:
\begin{equation}\begin{aligned}
    \mathbf{R}_{ij}&= \begin{pmatrix}\begin{smallmatrix}
        0 & 0 & 0 & 0\\
        0 & 0 & 1 & 0\\
        0 & 1 & 0 & 0\\
        0 & 0 & 0 & 0\end{smallmatrix}
    \end{pmatrix},\ 
    \mathbf{J}_{ij}= \begin{pmatrix}\begin{smallmatrix}
        0 & 0 & 0 & 0\\
        0 & 0 & \text{i} & 0\\
        0 & -\text{i} & 0 & 0\\
        0 & 0 & 0 & 0\end{smallmatrix}
    \end{pmatrix},\\
    \mathbf{E}_{ij}&=\begin{pmatrix}\begin{smallmatrix}
        0 & 0 & 0 & 0\\
        0 & 1 & 0 & 0\\
        0 & 0 & 1 & 0\\
        0 & 0 & 0 & 0\end{smallmatrix}
    \end{pmatrix},\ 
    \mathbf{S}_{ij}= \begin{pmatrix}\begin{smallmatrix}
        0 & 0 & 0 & 0\\
        0 & 1 & 0 & 0\\
        0 & 0 & -1 & 0\\
        0 & 0 & 0 & 0\end{smallmatrix}
    \end{pmatrix},
    \end{aligned}
\end{equation}
\begin{equation}\label{eq:HWmat}
    \mathbf{H}_{HWij}=\frac{a+c}{2}\mathbf{E}_{ij}+\frac{a-c}{2}\mathbf{S}_{ij}+\frac{b+\bar{b}}{2}\mathbf{R}_{ij}+\frac{b-\bar{b}}{2\text{i}}\mathbf{J}_{ij}.
\end{equation}
To simplify the decomposition in Equation~\ref{eq:HWmat}, we denote $r = \Re b$, $j = \Im b$, $e = \frac{a+c}{2}$, and $s = \frac{a-c}{2}$, where $r$ and $j$ are the real and imaginary parts of $b$, respectively. 
\begin{equation}\label{eq:HWmat_simple}
    \mathbf{H}_{HWij}=e\mathbf{E}_{ij}+s\mathbf{S}_{ij}+r\mathbf{R}_{ij}+j\mathbf{J}_{ij}.
\end{equation}
For an HWP subspace of $n$ qubits with an Hamming weight  of $k$, the dimension is given by $d_k = \tbinom{n}{k}$. %The HWP ansatz is constructed using two-qubit HWP operators, and the number of initial generators for the DLA depends on the physical qubit connectivity. 
Since the number of initial generators for the DLA depends on the physical qubit connectivity, we begin by analyzing an ideal case where two-qubit HWP gates are applied between any pair of qubits. The following Theorem outlines the conditions under which a two-qubit HWP gate is universal when full connectivity (FC) is available. These conditions provide a rigorous framework for understanding the expressiveness of the HWP ansatz.

\begin{table}[t]  \centering
\caption{\textbf{The commutation results of basis matrices.} \textbf{Top:} commutator on same qubits; \textbf{Bottom:} commutator on different qubits. Coefficients ommited.}\label{tab:comm_basis}
\begin{minipage}{0.55\linewidth}\renewcommand{\arraystretch}{1.2}
    \resizebox{\textwidth}{!}{\begin{tabular}{c|c|c|c|c} 
        \hline
         & $\mathbf R_{ij}$ & $\mathbf J_{ij}$ & $\mathbf E_{ij}$ & $\mathbf S_{ij}$ \\
        \hline
         $\mathbf R_{i j}$&$0$     &$\mathbf S_{ij}$&$0$     &$\mathbf J_{ij}$ \\
         $\mathbf J_{ij}$ & &$0$     &$0$     &$\mathbf R_{ij}$ \\
         $\mathbf E_{ij}$ &  &    &$0$     &$0$      \\
         $\mathbf S_{ij}$ && &    &$0$      \\
        \hline
    \end{tabular}}
\end{minipage}\par \vspace{0.1cm}
\begin{minipage}{0.8\linewidth}\renewcommand{\arraystretch}{1.2}
    \resizebox{\textwidth}{!}{\begin{tabular}{c|c|c|c|c} 
        \hline
         &$\mathbf R_{ij}$                   &$\mathbf J_{ij}$                   &$\mathbf E_{ij}$                   &$\mathbf S_{ij}$ \\
         \hline
         $\mathbf R_{jk}$&$\mathbf J_{ik}\otimes \sigma^z_j$&$\mathbf R_{ik}\otimes \sigma^z_j$&$\mathbf J_{jk}\otimes \sigma^z_i$&$\mathbf J_{jk}$ \\
         $\mathbf J_{jk}$&  &$\mathbf J_{ik}\otimes \sigma^z_j$&$\mathbf R_{jk}\otimes \sigma^z_i$&$\mathbf R_{jk}$ \\
         $\mathbf E_{jk}$& & &$0$                        &$0$      \\
         $\mathbf S_{jk}$&                 &           &             &$0$      \\
         \hline
    \end{tabular}}
\end{minipage}
\end{table}
\begin{table}[t]  \centering
\caption{\textbf{The commutation results of basis matrices tensored with Pauli-Z.} \textbf{Top:} only one with Pauli-Z; \textbf{Bottom:} both tensored with Pauli-Z. Coefficients ommited.}\label{tab:comm_z}
\begin{minipage}{0.81\linewidth}\centering\renewcommand{\arraystretch}{1.3}
    \resizebox{\textwidth}{!}{\begin{tabular}{c|c|c|c|c} 
        \hline
        &$\mathbf R_{ij}\otimes\sigma^z_k$&$\mathbf J_{ij}\otimes\sigma^z_k$&$ \mathbf E_{ij}\otimes\sigma^z_k$&$\mathbf S_{ij}\otimes\sigma^z_k$  \\
        \hline
        $\mathbf R_{jk}$         &         $\mathbf J_{ik}$         &         $\mathbf R_{ik}$     &  $\mathbf J_{jk}$  &$\mathbf J_{jk}\otimes\sigma^z_i$  \\
        $\mathbf J_{jk}$         &   $\mathbf{R}_{ik}$   &         $\mathbf J_{ik}$    &   $\mathbf R_{jk}$   &$\mathbf R_{jk}\otimes \sigma^z_i$ \\
        $\mathbf E_{jk}$         &   $\mathbf J_{ij}$       &   $\mathbf R_{ij}$   &    $0$ &        $0$             \\
        $\mathbf S_{jk}$         & $\mathbf J_{ij}\otimes\sigma^z_k$ & $\mathbf R_{ij}\otimes\sigma^z_k$  & $0$ &           $0$    \\
        \hline
    \end{tabular}}
\end{minipage}\par \vspace{0.1cm}
\begin{minipage}{0.9\linewidth}\renewcommand{\arraystretch}{1.3}
    \resizebox{\textwidth}{!}{\begin{tabular}{c|c|c|c|c}
        \hline
        &$\mathbf R_{ij}\otimes\sigma^z_k$&$\mathbf J_{ij}\otimes\sigma^z_k$&$ \mathbf E_{ij}\otimes\sigma^z_k$&$\mathbf S_{ij}\otimes\sigma^z_k$  \\  
        \hline
        $\mathbf R_{jk}\otimes\sigma^z_i$&$\mathbf J_{ik}\otimes\sigma^z_j$&$\mathbf R_{ik}\otimes\sigma^z_j$&    $\mathbf J_{jk}\otimes\sigma^z_i$ &     $\mathbf J_{jk}$    \\ 
        $\mathbf J_{jk}\otimes\sigma^z_i$&&$\mathbf J_{ik}\otimes\sigma^z_j$&  $\mathbf R_{jk}\otimes\sigma^z_i$  &     $\mathbf R_{jk}$    \\
        $\mathbf S_{jk}\otimes\sigma^z_i$&&&    $0$ &     $0$    \\
        $\mathbf E_{jk}\otimes\sigma^z_i$&&&&     $0$    \\
        \hline
    \end{tabular}}
\end{minipage}
\end{table}

\begin{theorem}\label{thm:main}
    For any $n$ and $k$, a two-qubit HWP gate is universal with FC if and only if the coefficients satisfy one of the following two conditions:
    \begin{equation}
            \text{(1) }e\neq0,j\neq0;\text{\quad\quad (2) }e\neq0,r\neq0,s\neq0.
            \label{eq:theo}
    \end{equation}
\end{theorem}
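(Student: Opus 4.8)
The plan is to translate universality into a statement about the Dynamical Lie Algebra (DLA) $\mathfrak g$ generated by the anti-Hermitian operators $\{\mathrm i\mathbf H_{HWij}\}_{i<j}$ acting on the weight-$k$ subspace. By the controllability theorem invoked above \citep{ramakrishna1995controllability}, the FC ansatz is universal precisely when $\mathfrak g$ fills the whole algebra $\mathfrak{su}(d_k)$ of the $d_k=\binom nk$-dimensional HWP subspace (modulo the global-phase center), so the theorem reduces to determining, as a function of $(e,s,r,j)$, exactly when $\dim\mathfrak g$ is maximal. As a preliminary I would rewrite the four building blocks in Pauli form: $\mathbf E_{ij}=\tfrac12(\mathbf 1-Z_iZ_j)$ and $\mathbf S_{ij}=\tfrac12(Z_i-Z_j)$ are diagonal, while $\mathbf R_{ij}=\tfrac12(X_iX_j+Y_iY_j)$ and $\mathbf J_{ij}=\tfrac12(X_iY_j-Y_iX_j)$ are hoppings. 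This makes $e$ the strength of the \emph{only} density--density interaction $Z_iZ_j$ and exposes a useful $\mathbb Z_2$ structure: the bit-flip $\prod_a X_a$ commutes with $\mathbf E,\mathbf R$ but anticommutes with $\mathbf S,\mathbf J$.

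For necessity I would dispose of each excluded region by its own mechanism. If $e=0$, the unique interaction term is gone and every generator is (quasi-)free; I would show $\mathfrak g$ is contained in the image of the free-fermionic/matchgate algebra $\mathfrak{so}(2n)$, so $\dim\mathfrak g=O(n^2)\ll d_k^2$ for $2\le k\le n-2$ and universality is impossible. If $r=j=0$ the Hamiltonian is diagonal and $\mathfrak g$ is abelian. The essential case is $e,r\neq0$ with $s=j=0$, i.e. the all-to-all XXZ-type model built from $\mathbf E$ and $\mathbf R$ alone: here every generator commutes with $\prod_a X_a$, which intertwines the weight-$k$ and weight-$(n-k)$ sectors and forbids independent control of a sector and its complement; at half filling it acts within the sector and forces $\mathfrak g$ to be reducible, while for $k\neq n/2$ I would instead argue directly that the real, swap-symmetric generators close into a \emph{proper} reductive subalgebra of $\mathfrak{su}(d_k)$. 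The role of the theorem's conditions is now transparent: $e\neq0$ escapes the free-fermion collapse, $(j\neq0 \text{ or } r\neq0)$ supplies hopping, and $(j\neq0 \text{ or } s\neq0)$ supplies a generator that breaks the $\prod_a X_a$ symmetry.

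For sufficiency I would construct $\mathfrak{su}(d_k)$ explicitly. Viewing the weight-$k$ basis states as vertices of the Johnson graph $J(n,k)$, the diagonal interaction $e\,\mathbf E$ (nontrivial for $k\ge2$) together with one hopping term lets me, through nested commutators, isolate an operator supported on a single edge $(\alpha,\beta)$; full connectivity then conjugates this block onto every edge, and connectivity of $J(n,k)$ propagates the resulting $\mathfrak{su}(2)$ blocks to all of $\mathfrak{su}(d_k)$. The second clause guarantees that on each edge both $E_{\alpha\beta}+E_{\beta\alpha}$ and $\mathrm i(E_{\alpha\beta}-E_{\beta\alpha})$ are reachable: a nonzero chiral hopping $j$ gives the imaginary direction at once, whereas in the real sector the symmetry-breaking field $\mathbf S$ ($s\neq0$) is what moves the generated module beyond the $\prod_a X_a$-symmetric subalgebra and completes each block. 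I would organise the argument as an induction on $n$, freezing a qubit to embed the weight-$k$ problem on $n-1$ qubits, checking the small $(n,k)$ cases and the degenerate sectors $k\in\{1,n-1\}$ (where essentially any nonzero gate is already universal) by hand.

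The step I expect to be the genuine obstacle is the necessity half in the real case for generic $k$: showing that the all-to-all XXZ model generates only a proper subalgebra of $\mathfrak{su}(d_k)$ when $k\neq n/2$, where no single conserved charge survives inside one sector and the reduction is purely Lie-algebraic. Pinning down the exact proper subalgebra (via the $\theta(X)=-X^{\mathsf T}$ grading and the $\mathfrak k_0$-module generated by $\mathbf E,\mathbf R$) and certifying the free-fermion dimension bound for genuinely all-to-all, non-nearest-neighbour spin terms are the two places where the bookkeeping is most delicate and where I would concentrate the effort.
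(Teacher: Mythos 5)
Your overall framing (universality $\Leftrightarrow$ maximal DLA dimension, then a case analysis over which of $e,r,j,s$ vanish) matches the paper, but two of your necessity mechanisms are wrong, and one of them would lead you to prove a false statement. First, the claim that $e=0$ forces a free-fermionic collapse with $\dim\mathfrak g=O(n^2)$ fails: under full connectivity the hopping $\mathbf R_{ij}=\tfrac12(X_iX_j+Y_iY_j)$ for non-adjacent $i,j$ carries a Jordan--Wigner string and is \emph{not} quadratic in fermions, and indeed the paper's computation shows that the generator $\mathbf R_{ij}$ alone already produces a DLA of dimension $d_k^2-1$ for $k<n/2$ (their types \Romannum{2} and \Romannum{8}). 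The true role of $e\neq 0$ is much more delicate: $\mathbf E_{ij}$ is the only basis matrix with nonzero trace in the weight-$k$ sector, and without it the DLA tops out at $d_k^2-1$, i.e.\ at $\mathfrak{su}(d_k)$ rather than $\mathfrak u(d_k)$. This also exposes an inconsistency in your setup: you take universality to mean filling $\mathfrak{su}(d_k)$ ``modulo the global-phase center,'' but under that criterion the configuration $r,j\neq0,\ e=0$ \emph{would} be universal and the theorem's requirement $e\neq0$ would be false. The paper's criterion (Lemma~\ref{lemma:control}) is $\dim\mathfrak g=d_k^2$, and your proof must adopt it, or the ``only if'' direction collapses.

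Second, in the case $e,r\neq0$, $j=s=0$ you propose to show that the generators close into a proper subalgebra of $\mathfrak{su}(d_k)$ even for $k\neq n/2$. That contradicts the paper's explicit result (type \Romannum{7}): for $k<n/2$ this configuration attains the full dimension $d_k^2$; the $\prod_a X_a$ obstruction only acts \emph{within} the sector at half filling, and there is no antiunitary or grading constraint surviving the restriction to a single sector when $k\neq n/2$ (generators of the form $\mathrm i\times(\text{real symmetric})$ face no Lie-algebraic obstruction to generating all of $\mathfrak u(d)$). Since the theorem is quantified over all $n,k$, the failure at $k=n/2$ alone establishes necessity here, so you should delete the $k\neq n/2$ claim rather than try to prove it. Your sufficiency sketch via edge operators on the Johnson graph is a plausible alternative to the paper's route (which instead tabulates the commutators of $\mathbf R,\mathbf J,\mathbf E,\mathbf S$ and their $\sigma^z$-dressed companions and sorts all fifteen sign patterns into eight closed DLA types), but as it stands the necessity half of your argument does not go through.
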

\begin{proof}
The most critical point to understand the results is by analyzing the commutators of these four basis matrices. We begin by analyzing the commutators for two qubits in the same set ($ij$ and $ij$) and for different qubits ($ij$ and $jk$). It is important to note that basis matrices acting on entirely distinct qubits, such as $ij$ and $kl$, do not commute, and therefore this case is excluded from our analysis.
As shown in Table~\ref{tab:comm_basis}, the commutators between basis matrices on $ij$ and $jk$ involve the Pauli-Z operator. Consequently, we extend the analysis to include commutators between basis matrices and those tensored with the Pauli-Z operator. The results in Table~\ref{tab:comm_z} demonstrate that these commutators form a closed algebra, generating no new elements.

There are fifteen distinct configurations of the HWP Hamiltonian, obtained by combining the four basis matrices (excluding the trivial all-zero case). Utilizing the commutation relations derived earlier, we compute the final commutation outcomes for these configurations and determine the corresponding DLA dimensions. We now provide the derivation of the eight distinct DLA types that cover all fifteen configurations. The directed arrows in figure~\ref{fig:eighttype} indicate generative relations of the elements. 
\begin{figure}
    \centering
    \includegraphics[width=\linewidth]{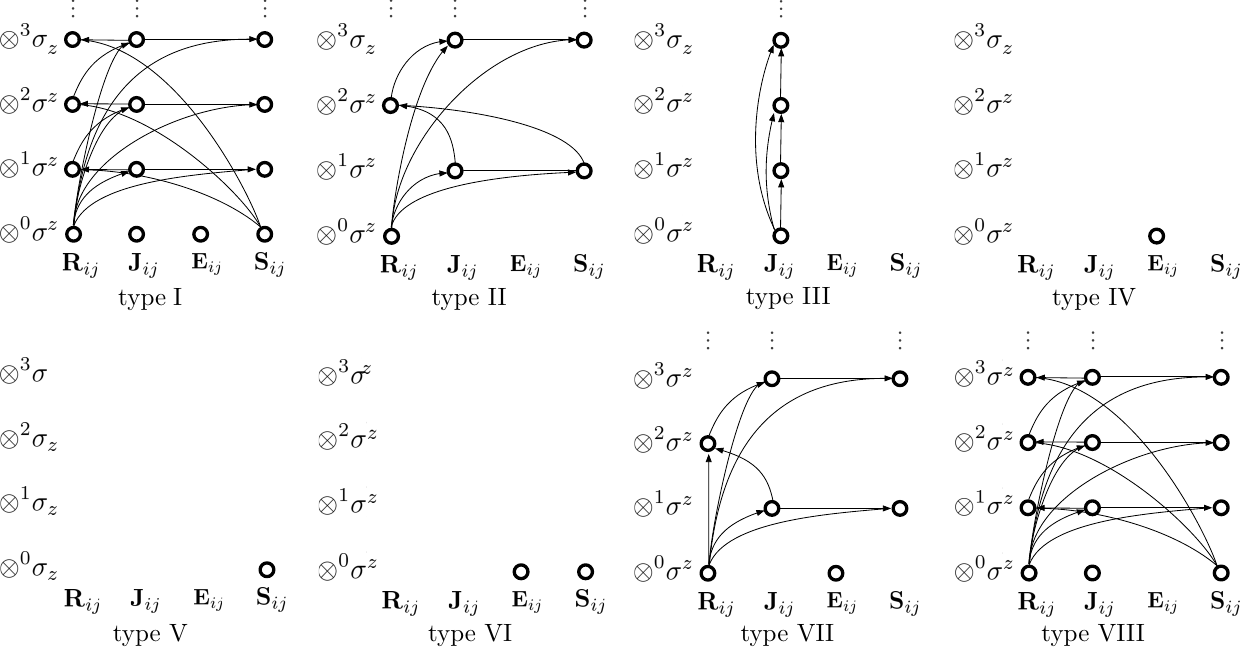}
    \caption{\textbf{Generative relationships of the basis matrices.} Directed arrows indicate the generative pathways via commutators. Although successive commutations progressively tensor additional Pauli-Z ($\sigma_z$) operators, they strictly do not yield any new type of basis matrix, thereby explicitly demonstrating the closure of the generated algebra.}
    \label{fig:eighttype}
\end{figure}
Generators in the DLA type \Romannum{1} lead to the following set of generative relationships:
\begin{equation}
    \begin{aligned}
        [\mathbf R_{ij},\mathbf R_{jk}\otimes^l \sigma^z_{m_l\neq i,j,k}]&=\text i\mathbf J_{ki}\otimes\sigma^z_j\otimes^l \sigma^z_{m_l\neq i,j,k}\\&=\text i\mathbf J_{ki}\otimes^{l+1}\sigma^z_{m_l\neq i,k}, \\
        [\mathbf J_{ki}\otimes^{l+1}\sigma^z_{m_l\neq i,k},\mathbf R_{ki}]&=2\text i\mathbf S_{ki}\otimes^{l+1}\sigma^z_{m_l\neq i,k},\\
        [\mathbf S_{ki},\mathbf J_{ki}\otimes^{l+1}\sigma^z_{m_l\neq i,k}]&=2\text i\mathbf R_{ki}\otimes^{l+1}\sigma^z_{m_l\neq i,k}.
    \end{aligned}
\end{equation}
Generators in the DLA type \Romannum{2} lead to the following set of generative relationships:
\begin{equation}
    \begin{aligned}
        [\mathbf R_{ij},\mathbf R_{jk}\otimes^{2l} \sigma^z_{m_l\neq i,j,k}]&=\text i\mathbf J_{ki}\otimes\sigma^z_j\otimes^{2l} \sigma^z_{m_l\neq i,j,k}\\&=\text i\mathbf J_{ki}\otimes^{2l+1}\sigma^z_{m_l\neq i,k},\\
        [\mathbf J_{ki}\otimes^{2l+1}\sigma^z_{m_l\neq i,k},\mathbf R_{ik}]&=2\text i\mathbf S_{ki}\otimes^{2l+1}\sigma^z_{m_l\neq i,k},\\
        [\mathbf S_{ki}\otimes\sigma^z_{j},\mathbf J_{ki}\otimes^{2l+1}\sigma^z_{m_l\neq i,k}]&=2\text i\mathbf R_{ki}\otimes\sigma^z_{j}\otimes^{2l+1}\sigma^z_{m_l\neq i,j,k}\\&=2\text iR_{ki}\otimes^{2l+2}\sigma^z_{m_l\neq i,j,k}.
    \end{aligned}
\end{equation}
Generators in the DLA type \Romannum{3} lead to the following set of generative relationships:
\begin{equation}
\begin{aligned}
    [\mathbf J_{ij},\mathbf J_{jk}\otimes^{l} \sigma^z_{m_l\neq i,j,k}]&=\text i\mathbf J_{ik}\otimes\sigma^z_j\otimes^{l} \sigma^z_{m_l\neq i,j,k}\\&=\text i\mathbf J_{ik}\otimes^{2l+1}\sigma^z_{m_l\neq i,k}
    \end{aligned}
\end{equation}
Generators in the DLA type \Romannum{4}, \Romannum{5} and \Romannum{6} are simple configurations with no further generation.\\
Generators in the DLA type \Romannum{7} lead to the following set of generative relationships:
\begin{equation}
    \begin{aligned}
        [\mathbf R_{ij},\mathbf R_{jk}\otimes^{2l} \sigma^z_{m_l\neq i,j,k}]&=\text i\mathbf J_{ki}\otimes\sigma^z_j\otimes^{2l} \sigma^z_{m_l\neq i,j,k}\\&=\text i\mathbf J_{ki}\otimes^{2l+1}\sigma^z_{m_l\neq i,k},\\
        [\mathbf J_{ki}\otimes^{2l+1}\sigma^z_{m_l\neq i,k},\mathbf R_{ik}]&=2\text i\mathbf S_{ki}\otimes^{2l+1}\sigma^z_{m_l\neq i,k},\\
        [\mathbf S_{ki}\otimes\sigma^z_{j},\mathbf J_{ki}\otimes^{2l+1}\sigma^z_{m_l\neq i,k}]&=2\text i\mathbf R_{ki}\otimes\sigma^z_{j}\otimes^{2l+1}\sigma^z_{m_l\neq i,j,k}\\&=2\text iR_{ki}\otimes^{2l+2}\sigma^z_{m_l\neq i,j,k}
    \end{aligned}
\end{equation}
Generators in the DLA type \Romannum{8} lead to the following set of generative relationships:
\begin{equation}
    \begin{aligned}
        [\mathbf R_{ij},\mathbf R_{jk}\otimes^l \sigma^z_{m_l\neq i,j,k}]&=\text i\mathbf J_{ki}\otimes\sigma^z_j\otimes^l \sigma^z_{m_l\neq i,j,k}\\&=\text i\mathbf J_{ki}\otimes^{l+1}\sigma^z_{m_l\neq i,k},\\
        [\mathbf J_{ki}\otimes^{l+1}\sigma^z_{m_l\neq i,k},\mathbf R_{ki}]&=2\text i\mathbf S_{ki}\otimes^{l+1}\sigma^z_{m_l\neq i,k},\\
        [\mathbf S_{ki},\mathbf J_{ki}\otimes^{l+1}\sigma^z_{m_l\neq i,k}]&=2\text i\mathbf R_{ki}\otimes^{l+1}\sigma^z_{m_l\neq i,k}
    \end{aligned}
\end{equation}
The dimension of these eight DLA types is demonstrated in Table~\ref{tab:DLAtype}. Notice that $k\leq n/2$ in the Table since the dimension of DLA $\{n,k\}$ is the same as $\{n,n-k\}$. We then derive all fifteen configurations (excluding the all-zero one) and illustrate how they can be regulated to these eight DLA types. The key observation is that the generator $\bm{E}_{ij}$ can not be generated by the combination of other three basis matrices. \\
1) $r,\ j\neq0, e=s=0$: From Table~\ref{tab:comm_basis}, $[\mathbf R_{ij},\mathbf J_{ij}]=\mathbf S_{ij}$. Leading to type \Romannum{7}.\\
2) $r,\ s\neq0, e=j=0$: From Table~\ref{tab:comm_basis}, $[\mathbf R_{ij},\mathbf S_{ij}]=\mathbf J_{ij}$. Leading to type \Romannum{7}.\\
3) $s,\ j\neq0, e=r=0$: From Table~\ref{tab:comm_basis}, $[\mathbf S_{ij},\mathbf J_{ij}]=\mathbf R_{ij}$. Leading to type \Romannum{7}.\\
4) $r,\ j,\ e\neq0, s=0$: Complimenting 1) with $\mathbf E_{ij}$. Leading to type \Romannum{1}.\\
5) $r,\ s,\ e\neq0, j=0$: Complimenting 2) with $\mathbf E_{ij}$. Leading to type \Romannum{1}.\\
6) $s,\ j,\ e\neq0, r=0$: Complimenting 3) with $\mathbf E_{ij}$. Leading to type \Romannum{1}.\\
7) $e,\ j\neq0,\ s=r=0$: From Table~\ref{tab:comm_basis} and Table~\ref{tab:comm_z}, 
\begin{equation}
    \begin{split}
        [\mathbf E_{ij},\mathbf J_{jk}]&=\text{i}(\mathbf R_{jk}\otimes\sigma^z_i)\\
        [\mathbf R_{ij}\otimes\sigma^z_k,\mathbf J_{jk}]&=\text{i}\mathbf R_{ik}\\
        [\mathbf R_{ik},\mathbf J_{ik}]&=-2\text{i}\mathbf S_{ik}.
    \end{split}
\end{equation}
This then lead to all four basis matrices as generators, which is type \Romannum{1}.

To sum up, we can conclude the classification of the fifteen configurations to the eight DLA types as shown in Table~\ref{tab:DLAcfg}. Of all 15 configurations, only five yield full dimensionality in the FC configuration.

% We have the following critical observations regarding the Hamiltonian in Equation~\ref{eq:HWmat}.
% \begin{equation}
% [\bm{H}_{HWij},\bm{H}_{HWji}]=4 \text i r( I\mathbf S_{ij} -  s   \mathbf J_{ij})
% \label{eq:detsj1}
% \end{equation}
% \begin{equation}
% \bm{H}_{HWij}+\bm{H}_{HWji}=2e\mathbf E_{ij}+2r\mathbf R_{ij}
% \end{equation}
% \begin{equation}
% \bm{H}_{HWij}-\bm{H}_{HWji}=2j\mathbf J_{ij}+2s\mathbf S_{ij}
% \label{eq:detsj2}
% \end{equation}

\begin{table*}[tb]
\begin{center}
\caption{Eight DLA types and the corresponding dimension. Checkmark denotes basis matrix in the generator set $\mathcal{G}$.}
\label{tab:DLAtype}
\resizebox{\textwidth}{!}{
\begin{tabular}{c|c|c|c|c|c|c|c|c} 
\hline
&\Romannum{1}&\Romannum{2}&\Romannum{3}&\Romannum{4}&\Romannum{5}&\Romannum{6}&\Romannum{7}&\Romannum{8}\\
\hline
$\mathbf R_{ij}$&\Checkmark&\Checkmark&&& & &\Checkmark&\Checkmark\\
$\mathbf J_{ij}$&\Checkmark&&\Checkmark& & &&&\Checkmark\\
$\mathbf E_{ij}$&\Checkmark& &&\Checkmark& &\Checkmark&\Checkmark&\\
$\mathbf S_{ij}$&\Checkmark & &&&\Checkmark& \Checkmark&&\Checkmark\\
\hline
dim
&$d_k^2$
&\tiny{$\left\{\begin{aligned}&d_k^2-1&\text{if } k< n/2\\&\frac{d_k^2}2-2&\text{if }k=n/2\end{aligned}\right.$}
&$\frac{d_k(d_k-1)}2$ 
&\tiny{$\left\{\begin{aligned}&n&\text{if } k=1\\&\frac{n(n-1)}2&\text{if }1<k<n/2\\&\frac{(n-1)(n-2)}2&\text{if }k=n/2 \end{aligned}\right.$} 
&$n-1$ 
&\tiny{$\left\{\begin{aligned}&n&\text{if } k=1\\&\frac{n(n-1)}2&\text{if }k>1\end{aligned}\right.$}
&\tiny{$\left\{\begin{aligned}&d_k^2&\text{if } k< n/2\\&d_k^2/2-1&\text{if }k=n/2\end{aligned}\right.$}&$d_k^2-1$\\
\hline
\end{tabular}} 
\end{center}
\end{table*}
\begin{table}[tb]
\begin{center}
\caption{The classification of fifteen configurations to eight DLA types. Checkmark denotes the corresponding coefficient is non-zero in the decomposition of the Hamiltonian.}
\label{tab:DLAcfg}
\resizebox{\linewidth}{!}{\renewcommand{\arraystretch}{1.3}\begin{tabular}{c|ccccc|c|c|c|c|c|c|cccc} 
\hline
$r$&\Checkmark&\Checkmark&\Checkmark&          &          &\Checkmark&          &          &          &          &\Checkmark&\Checkmark&\Checkmark&\Checkmark&          \\
$j$&\Checkmark&\Checkmark&          &\Checkmark&\Checkmark&          &\Checkmark&          &          &          &          &\Checkmark&\Checkmark&          &\Checkmark\\
$e$&\Checkmark&\Checkmark&\Checkmark&\Checkmark&\Checkmark&          &          &\Checkmark&          &\Checkmark&\Checkmark&          &          &          &          \\
$s$&\Checkmark&          &\Checkmark&          &\Checkmark&          &          &          &\Checkmark&\Checkmark&          &\Checkmark&          &\Checkmark&\Checkmark\\
\hline
DLA& \multicolumn{5}{c|}{\Romannum{1}}&\Romannum{2}&\Romannum{3}&\Romannum{4}&\Romannum{5}&\Romannum{6}&\Romannum{7}& \multicolumn{4}{c}{\Romannum{8}}\\
\hline
\end{tabular}} 
\end{center}
\end{table}

\end{proof}

% By calculating the DLA dimension, we explore the commutation relations between the four basis matrices, where there are 16 possible combinations (whether $e,r,j,s\neq 0$). These combinations are then classified into eight distinct DLA types, each with a unique dimensionality. Of the 16 combinations, only five yield full dimensionality in the FC configuration, meeting the conditions set forth in Theorem~\ref{thm:main}.

Given that achieving FC on most quantum processors remains a significant challenge, we also extend our analysis to a more realistic scenario involving nearest-neighbor (NN) connectivity. In this configuration, each physical qubit is only connected to its two adjacent qubits, forming a circular topology. The impact of this limited connectivity on the universality of the HWP ansatz is explored in Lemma~\ref{lemma:NN}. 
\begin{lemma}
\label{lemma:NN}
If the set of generators contains all four basis matrices on NN qubits, which are $\mathbf R_ {i, i + 1}$, $\mathbf J_{i, i + 1}$, $\mathbf E_ {i, i + 1}$, $\mathbf S_{i, i + 1}$, the dimension of DLA is $d_k^2$.
\end{lemma}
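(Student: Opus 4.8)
The plan is to prove the stronger statement that, restricted to the weight-$k$ sector $\mathcal{H}_k$ (a $d_k$-dimensional space with orthonormal basis $\{\ket{x} : |x|=k\}$), the Lie algebra generated by the four families $\{\mathbf{R}_{i,i+1},\mathbf{J}_{i,i+1},\mathbf{E}_{i,i+1},\mathbf{S}_{i,i+1}\}$ is the entire unitary algebra $\mathfrak{u}(d_k)$, whose real dimension is exactly $d_k^2$. First I would record how each generator acts on $\mathcal{H}_k$: all four are block-diagonal in the computational basis, $\mathbf{R}_{i,i+1}$ and $\mathbf{J}_{i,i+1}$ hop a single excitation across the edge $(i,i+1)$, while $\mathbf{E}_{i,i+1}$ and $\mathbf{S}_{i,i+1}$ are diagonal. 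On the two-dimensional block spanned by any pair $\ket{x},\ket{x'}$ related by such a hop, the four restrictions realize $\sigma_x,\sigma_y$, the identity $I$, and $\sigma_z$; in particular $[\mathbf{R}_{i,i+1},\mathbf{J}_{i,i+1}]=2\mathrm{i}\,\mathbf{S}_{i,i+1}$, so a single nearest-neighbour edge already generates the on-block $\mathfrak{u}(2)$ — the same local data underlying the full-connectivity universality of Theorem~\ref{thm:main}.

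The strategy is then the standard connected-graph argument for controllability, in three stages. Stage one is connectivity: define the graph $G_k$ on the weight-$k$ strings whose edges join configurations differing by a single nearest-neighbour hop, and show $G_k$ is connected, since any arrangement of $k$ ones among $n$ ring sites can be carried to any other by adjacent transpositions (a bubble-sort argument, the circular topology only adding edges). Stage two is to isolate, for each edge of $G_k$, the elementary transition operators $\ket{x}\bra{x'}+\ket{x'}\bra{x}$ and $\mathrm{i}(\ket{x}\bra{x'}-\ket{x'}\bra{x})$. Stage three chains these along paths via commutators such as $[\ket{x}\bra{x'}+\ket{x'}\bra{x},\,\ket{x'}\bra{x''}+\ket{x''}\bra{x'}]$, which produce transitions between non-adjacent vertices; together with the diagonal differences $\mathrm{i}(\ket{x}\bra{x}-\ket{x'}\bra{x'})$ obtained from the commutator of the two transition types, connectivity of $G_k$ yields every traceless operator and hence all of $\mathfrak{su}(d_k)$.

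The main obstacle is stage two, the \emph{localization} problem: a generator such as $\mathbf{R}_{i,i+1}$ is not one $\sigma_x$ but a simultaneous sum of $\sigma_x$'s over all hop-blocks of edge $(i,i+1)$, and a single block must be extracted. I would resolve this by filtering with the diagonal generators: because $\mathbf{E}_{j,j+1}$ and $\mathbf{S}_{j,j+1}$ assign block-dependent eigenvalues fixed by the local bit pattern at each edge $j$, repeated commutators $[\mathbf{D},[\mathbf{D},\mathbf{R}_{i,i+1}]]$ with a suitable diagonal $\mathbf{D}$ act as a polynomial in the eigenvalue differences, and a Vandermonde/interpolation argument singles out one transition operator. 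The crux is to verify that any two distinct blocks hopped by the same edge are separated by some diagonal generator, which holds because two weight-$k$ strings agreeing off positions $i,i+1$ must differ in their local environment elsewhere. Finally, to upgrade $\mathfrak{su}(d_k)$ to $\mathfrak{u}(d_k)$ I would exhibit a central direction: the generator $\sum_i \mathbf{E}_{i,i+1}$ has diagonal entry on $\ket{x}$ equal to the number of domain walls of $x$, which is strictly positive on $\mathcal{H}_k$ for $0<k<n$, so its trace on $\mathcal{H}_k$ is nonzero; adjoining this to $\mathfrak{su}(d_k)$ gives the identity direction and hence all of $\mathfrak{u}(d_k)$, of dimension $d_k^2$, as claimed.
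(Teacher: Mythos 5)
Your overall strategy (work directly in the $d_k$-dimensional weight-$k$ sector, prove connectivity of the hop graph $G_k$, localize each generator to a single matrix unit, then chain transitions along paths to exhaust $\mathfrak{su}(d_k)$ and adjoin a traced element) is genuinely different from the paper's, which never localizes: it instead promotes the nearest-neighbour generators to the full set $\mathbf R_{ij},\mathbf J_{ij},\mathbf E_{ij},\mathbf S_{ij}$ on \emph{arbitrary} pairs via commutators with $\sigma^z$-decorated intermediaries, and then invokes the type-\Romannum{1} dimension count from the proof of Theorem~\ref{thm:main}. Stages one and three of your plan, and the final $\mathfrak{su}\to\mathfrak{u}$ step, are fine. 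The problem is stage two, which you correctly identify as the crux but do not actually resolve.

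The gap is this: every filtering operation you have available --- $[\mathbf D,T]$, $[\mathbf D,[\mathbf D,T]]$, or conjugation $e^{t\mathbf D}Te^{-t\mathbf D}$ --- rescales the block $\{\ket{x},\ket{x'}\}$ of $T=\mathbf R_{i,i+1}$ by a function of the \emph{within-block eigenvalue difference} $D_x-D_{x'}$, not of the eigenvalues themselves. For the initial diagonal generators $\mathbf E_{j,j+1},\mathbf S_{j,j+1}$, that difference depends only on the bits of $x$ at positions $i-1,i,i+1,i+2$ (all other edges $(j,j+1)$ lie in the complement of $\{i,i+1\}$, where $x$ and $x'$ agree, so they contribute difference $0$). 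Hence two distinct blocks whose configurations agree at positions $i-1$ and $i+2$ but differ farther away receive \emph{identical} coefficients under every such filtering and cannot be separated; e.g.\ for $n=6$, $k=2$, edge $(1,2)$, the blocks with environments $x_3\cdots x_6=0100$ and $0010$. Your stated separation criterion ("two weight-$k$ strings agreeing off positions $i,i+1$ must differ in their local environment elsewhere") also conflates the two states \emph{within} a block (which agree off $\{i,i+1\}$) with the labels of two \emph{distinct} blocks (which disagree there), and in the corrected form it is not sufficient for the reason above. To repair the argument you must first enlarge the diagonal/Cartan subalgebra with long-range diagonal elements before localizing --- which is, in effect, what the paper's $\mathbf R_{ij}\otimes\sigma^z_k$ and $\mathbf S_{ij}\otimes\sigma^z_k$ bootstrap accomplishes --- or abandon localization and count dimensions of the globally generated operator families as the paper does.
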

\begin{proof}
\begin{equation}\label{eq:JE}
[\mathbf J_{i,i+1},\mathbf E_{i+1,i+2}]=-\text i \mathbf R_{i,i+1}\otimes\sigma^z_{i+2}
\end{equation}
\begin{equation}
[\mathbf R_{i,i+1},\mathbf E_{i+1,i+2}]=\text i \mathbf J_{i,i+1}\otimes\sigma^z_{i+2}
\end{equation}
Then it can generate all $\mathbf R_ {i, i + 1} \otimes \sigma ^ z_ {i+2}$ and $\mathbf{J}_ {i, i + 1} \otimes \sigma ^ z_ {i+2} $, and
\begin{equation}
[\mathbf R_{i,i+1}\otimes\sigma^z_{i+2},\mathbf{R}_{i+1,i+2}]=\text i\mathbf J_{i,i+2}
\end{equation}
\begin{equation}
[\mathbf J_{i,i+1}\otimes\sigma^z_{i+2},\mathbf{R}_{i+1,i+2}]=-\text i \mathbf R_{i,i+2}
\end{equation}
Then it can generate all $\mathbf R_ {i, i+ 2} $ and $\mathbf J_ {i, i + 2}$, and so that all  $\mathbf R_ {ij} $, $\mathbf J_ {ij}$ and $\mathbf S_{ij}$ (since $ [\mathbf R_{ij},\mathbf J_{ij}]=-2i\mathbf S_{ij}$). With $[\mathbf J_{ik},[\mathbf J_{ij},\mathbf R_{jk}]]=2 \mathbf S_{ik}\otimes\sigma^z_j$, all the $\mathbf S_{ij}\otimes\sigma^z_k$ can be generated too. We notice that
\begin{equation}
\mathbf S_{ij}\otimes \sigma^z_{k} + \mathbf S_{jk}\otimes\sigma^z_i =- \mathbf E_{ij} + \mathbf E_{jk}
\end{equation}
Therefore, all $\mathbf E_{ij}$ can be generated by $\mathbf S_{i+1,i}\otimes \sigma^z_{j} + \mathbf S_{i,j}\otimes\sigma^z_{i+1} +\mathbf E_{i,i+1}=\mathbf E_{ij}$, so the DLA can be specified to type \Romannum{1} with the dimension $d_k^2$
\end{proof}

The lemma reveals that if the generator set contains all four basis matrices acting on NN qubits, the DLA dimension remains $d_k^2$. This result establishes a strong condition for NN connectivity, and we proceed by reducing the number of basis matrices in the generator set. Given that NN connectivity allows for fewer generators than FC, the DLA dimension for NN is either smaller than or equal to that for FC. Thus, we focus on verifying the five combinations (including the one in Lemma~\ref{lemma:NN}) from the two primary conditions in Theorem~\ref{thm:main}.
\begin{corollary}\label{coro:NN}
    For any $n$ and $k$, a two-qubit HWP gate is universal with NN connectivity if and only if the coefficients satisfy the following conditions:
    \begin{equation}
        \begin{aligned}
        \text{(1) }&e\neq0,j\neq0,r\neq0;\text{\quad (2) }e\neq0,j\neq0,s\neq0;\\
        \text{(3) }&e\neq0,r\neq0,s\neq0.
        \end{aligned}
        \label{eq:pro}
    \end{equation}
\end{corollary}
\begin{proof}
1) $r,\ j,\ e,\ s\neq 0$: Derived in Lemma~\ref{lemma:NN}.\\
2) $r,\ e,\ s\neq0,\ j=0$: $[\mathbf R_{i,i+1},\mathbf S_{i,i+1}]=\mathbf J_{i,i+1}$, leading to Lemma~\ref{lemma:NN}.\\
3) $r,\ j,\ e\neq0,\ s=0$: $[\mathbf R_{i,i+1},\mathbf J_{i,i+1}]=\mathbf S_{i,i+1}$, leading to Lemma~\ref{lemma:NN}.\\
4) $j,\ e,\ s\neq 0, r=0$: $[\mathbf J_{i,i+1},\mathbf S_{i,i+1}]=\mathbf R_{i,i+1}$, leading to Lemma~\ref{lemma:NN}.\\
5) $j,\ e\neq0,\ r=s=0$: From Table~\ref{tab:comm_basis}, 
\begin{equation}
\begin{aligned}
    [\mathbf{E}_{i,i+1},\mathbf{J}_{i+1,i+2}] &=\mathbf{R}_{i+1,i+2}\otimes\sigma^z_i\\
    [\mathbf{R}_{i,i+1}\otimes\sigma^z_{i+2},\mathbf{J}_{i+1,i+2}]&=-\mathbf{R}_{i+2,i}\\
    [\mathbf{J}_{i,i+1},\mathbf{E}_{i+1,i+2}]&=-\text{i} \mathbf{R}_{i,i+1}\otimes\sigma^z_{i+2}\\
    [\mathbf{R}_{i,i+2},\mathbf{E}_{i+2,i+3}]&=\text{i} \mathbf{J}_{i,i+2}\otimes\sigma^z_{i+3}\\
    [\mathbf{R}_{i,i+1}\otimes\sigma^z_{i+2},\mathbf{R}_{i,i+2}]&=\text{i}\mathbf{J}_{i,i+1}
\end{aligned}
\end{equation}
This set of generators can't generate basis elements $\mathbf R_{i,i+1}$ and $\mathbf{S}_{i,i+1}$, so it's specified to a different DLA type with 
\begin{equation}
\dim(\text{DLA})=\left\{\begin{aligned}&d_k^2&\text{if } k< n/2\\&d_k^2/2-1&\text{if }k=n/2\end{aligned}\right.
\end{equation}
\end{proof}

Corollary~\ref{coro:NN} indeed only rules out the combination $j,e\neq 0, r=s=0$. The above theorems offer top-down practical guidance for designing HWP ansätze and a robust framework for evaluating their effectiveness. While FC ensures maximum dimensionality, our findings show that even with the constraints of NN connectivity, universality can still be achieved under specific conditions. 

\begin{figure*}[t]
    \centering
    \includegraphics[width=\textwidth]{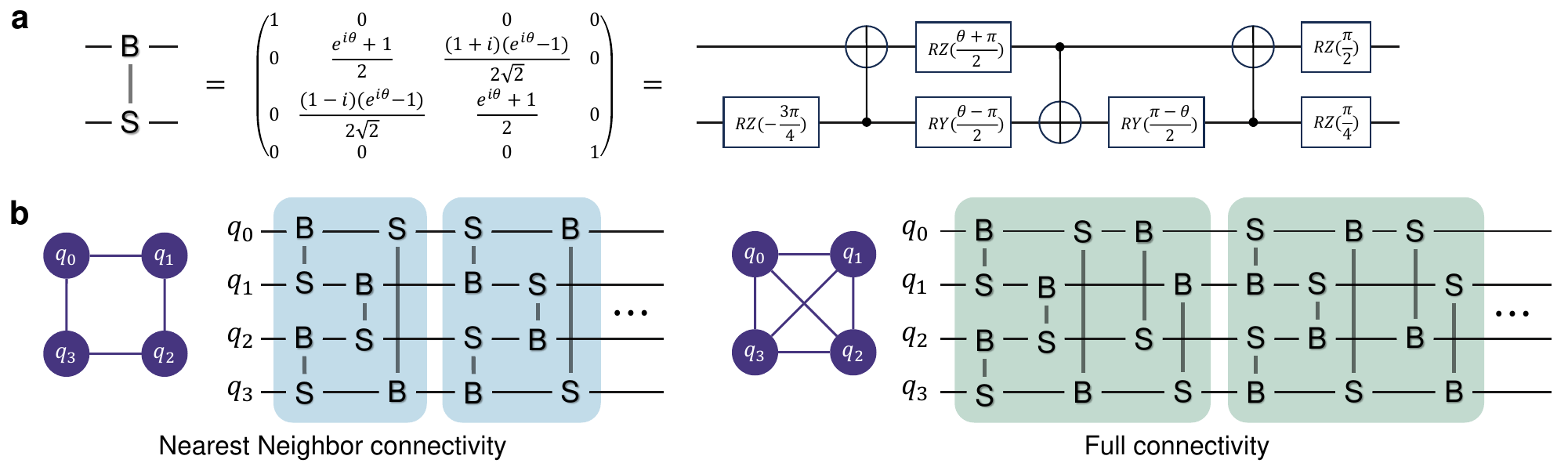}
    \caption{\textbf{Implementation details of the BS ansatz.} 
    \textbf{a.} A possible circuit implementation of the proposed BS gate. \textbf{b.} The implementation of the BS ansatz with both NN and FC connectivity. The physical qubit layout is depicted on the left. FC allows more generators per layer than NN connectivity. Additionally, certain HWP gates, e.g., the BS gate, are directional. To maximize expressivity, we alternate between layers containing BS gates and their reverse counterparts.}
    \label{fig:method}
\end{figure*}

\subsection{Theoretical Results of Trainability}
We now turn our attention to the trainability of the proposed HWP ansatz, which is critical due to the known barren plateaus problem in VQAs~\cite{mcclean2018barren,wang2021noise}. Typically, the gradients of the cost function decay exponentially with the number of qubits, scaling as $\mathcal{O}(1/2^n)$, since VQAs operate in the $2^n$-dimensional Hilbert space. However, as the HWP ansatz is confined to a $SU(d_k)$ subspace, we conjecture that the gradient decay should correspond to the subspace dimension, scaling as $\mathcal{O}(1/d_k)$. The following theorem provides the variance of the cost function gradients, using the framework established in~\cite{mcclean2018barren}:
\begin{theorem}
    Consider an $n$-qubit quantum ansatz operating in the subspace with Hamming weight  $k$. The variance of the cost function partial derivative is $Var_\theta[\partial_lC]\approx \frac{16k^2(n-k)^2}{n^4d_k}$.
    \label{theory:train}
\end{theorem}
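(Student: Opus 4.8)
The plan is to follow the second-moment (Weingarten) strategy of McClean et al.~\cite{mcclean2018barren}, but with every Haar integral taken over the unitary group $U(d_k)$ acting on the weight-$k$ subspace rather than over $U(2^n)$. First I would write the cost function as $C(\theta)=\Tr[\rho_0\,U(\theta)^\dagger H\,U(\theta)]$ with the gate of interest isolated, $U(\theta)=U_R\,e^{-\mathrm{i}\theta_l V_l}\,U_L$, so that the $l$-th partial derivative takes the commutator form $\partial_l C=\mathrm{i}\,\Tr\!\big(U_R^\dagger H U_R\,[V_l,\,U_L\rho_0 U_L^\dagger]\big)$. Because the initial state, the observable, and the HWP generators all commute with the Hamming-weight projector $P_k$, each factor can be replaced by its $d_k\times d_k$ block and the circuit fragments $U_L,U_R$ are elements of $U(d_k)$. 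This is the step where the subspace dimension $d_k$ enters in place of $2^n$.

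The second step is the trainability hypothesis. Theorem~\ref{thm:main} (and Corollary~\ref{coro:NN}) guarantees that the DLA equals $\mathfrak{u}(d_k)$, so a sufficiently deep universal HWP ansatz generates the full unitary group on the subspace; I would therefore assume that $U_L$ and $U_R$ independently form (at least approximate) $2$-designs on $U(d_k)$. Under this assumption the first moment $\mathbb{E}[\partial_l C]$ vanishes, so $\mathrm{Var}_\theta[\partial_l C]=\mathbb{E}[(\partial_l C)^2]$, which I would evaluate by expanding the square and integrating $U_R$ and then $U_L$ with the standard second-moment identity $\mathbb{E}_U[\Tr(H UBU^\dagger)^2]=\frac{\Tr(B^2)}{d_k^2-1}\big(\Tr(H^2)-\Tr(H)^2/d_k\big)$ for a traceless $B=[V_l,U_L\rho_0U_L^\dagger]$, followed by $\mathbb{E}_{U_L}\Tr([V_l,U_L\rho_0U_L^\dagger]^2)\approx-2\Tr(V_l^2)/d_k$. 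Keeping only the leading power of $d_k$, these combine to $\mathrm{Var}_\theta[\partial_l C]\approx \frac{2\,\Tr_{P_k}(H^2)\,\Tr_{P_k}(V_l^2)}{d_k^{3}}$, exhibiting the $\mathcal{O}(1/d_k)$ scaling once the two traces each restore a factor of $d_k$.

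The third step is the combinatorial evaluation of the subspace traces, which produces the explicit prefactor. For a two-qubit HWP generator on a pair $(i,j)$, the number of weight-$k$ basis states carrying exactly one excitation on $\{i,j\}$ is $2\binom{n-2}{k-1}$, so $\Tr_{P_k}(V_l^2)=2\binom{n-2}{k-1}$ and the normalized weight is $\frac{\Tr_{P_k}(V_l^2)}{d_k}=\frac{2k(n-k)}{n(n-1)}\approx\frac{2k(n-k)}{n^2}$; the cost observable contributes an analogous factor through $\Tr_{P_k}(H^2)/d_k$. Substituting both normalized traces into the leading expression above, the product of the two $k(n-k)/n^2$ factors with the commutator/$2$-design prefactor yields $\mathrm{Var}_\theta[\partial_l C]\approx\frac{16\,k^2(n-k)^2}{n^4 d_k}$, where the precise constant $16$ follows from the exact normalization of the HWP generator and observable used here (the identity $\binom{n-2}{k-1}/\binom{n}{k}=k(n-k)/(n(n-1))$ being the combinatorial engine).

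The main obstacle I anticipate is twofold. Analytically, the delicate part is the bookkeeping of the nested Weingarten contractions over $U_R$ and $U_L$: one must verify that the $\Tr(H)\Tr(B)$-type pieces drop because $[V_l,\cdot]$ is traceless, confirm that the subleading $1/d_k^2$ contributions of $\mathbb{E}_{U_L}\Tr([V_l,\cdot]^2)$ are genuinely negligible, and track which index loops survive so that exactly one uncompensated factor of $d_k$ remains in the denominator and the numerical constant lands on $16$. Conceptually, the weaker point is the $2$-design assumption itself: universality ($\mathfrak{u}(d_k)$ as DLA) only guarantees convergence to the Haar measure in the deep-circuit limit, so the statement is necessarily an approximation (hence the ``$\approx$''); making the design-approximation error explicit, rather than assuming exactness, would be the technically demanding refinement, though it does not affect the leading $\mathcal{O}(1/d_k)$ scaling that is the theorem's main content.
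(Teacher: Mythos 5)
Your proposal follows essentially the same route as the paper: isolate the gate, write $\partial_l C$ in commutator form, treat the two circuit fragments as independent approximate $2$-designs on $U(d_k)$ (justified by the universality results), apply the standard second-moment identities, and finish with the combinatorial identity $\binom{n-2}{k-1}/\binom{n}{k}=k(n-k)/(n(n-1))$ together with the subspace traces of $Z_0$. The order in which you integrate the two fragments differs from the paper's (you conjugate the observable by $U_R$ and the state by $U_L$; the paper pushes the observable through $\mathbf{U}_+$ and the state through $\mathbf{U}_-$), but by cyclicity of the trace these are identical computations.

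There is, however, one concrete slip in your bookkeeping that makes your intermediate expression inconsistent with your final answer. After correctly quoting the second-moment identity with the $\Tr(H)^2/d_k$ correction, you then "keep only the leading power of $d_k$" and write $\mathrm{Var}_\theta[\partial_l C]\approx 2\Tr_{P_k}(H^2)\Tr_{P_k}(V_l^2)/d_k^3$, attributing the observable's factor of order $k(n-k)/n^2$ to $\Tr_{P_k}(H^2)/d_k$. For $H=Z_0$ restricted to the weight-$k$ subspace one has $\Tr_{P_k}(Z_0^2)=d_k$ exactly, so $\Tr_{P_k}(H^2)/d_k=1$ and your truncated formula would give $\mathrm{Var}\approx 4k(n-k)/(n^2 d_k)$, not the claimed result. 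The term you dropped, $\Tr_{P_k}(Z_0)^2/d_k=d_k(n-2k)^2/n^2$, is of the \emph{same} order in $d_k$ as $\Tr_{P_k}(Z_0^2)$ and cannot be discarded: it is precisely the combination
\begin{equation*}
\frac{d_k\Tr(O^2)-\Tr(O)^2}{d_k^2-1}\;\approx\;1-\frac{(n-2k)^2}{n^2}\;=\;\frac{4k(n-k)}{n^2}
\end{equation*}
that supplies the second factor of $k(n-k)/n^2$ and hence the constant $16=2\times 2\times 4$. The paper keeps this term throughout (it is the source of the $\Tr^2(O)-d_k\Tr(O^2)$ numerator in its Eq.~for the variance), and your proof is repaired simply by not truncating it; with that correction the argument goes through exactly as in the paper, including the remark that the result degrades from $\mathcal{O}(1/n^3)$ at $k=1$ to $\binom{n}{n/2}^{-1}$ at half filling.
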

\begin{proof}
Consider the partial derivative of the cost function $C$ with respect to the parameters $\theta$. For some parameter $\theta_l$ in the $l$-th HWP gate $\mathbf{H}_l$, we have:
\begin{equation}
    \begin{aligned}
        \partial_{l} C(\theta)&=\partial_l\Big(\Tr\big(\mathbf{U}(\theta)\rho \mathbf{U}(\theta)^{\dagger}O\big)\Big),
    \end{aligned}
\end{equation}
where $\rho$ is the input state, $O$ is the observable to measure. We split the whole ansatz to two parts with $\mathbf{U}_-$ denotes the unitary matrix of the circuit before $\mathbf{H}_l$ and $\mathbf{U}_+$ denotes the unitary matrix after gate $l$.
\begin{equation}
    \begin{aligned}
        \partial_{l} C(\theta)&=\partial_l\Big(\Tr\big(\mathbf{U}_-\rho \mathbf{U}_-^{\dagger} O_+\big)\Big) \\
        &=\text{i}\Tr\big(\mathbf{U}_-\rho \mathbf{U}_-^{\dagger} [\mathbf{H}_l,O_+]\big),
    \end{aligned}
\end{equation}
where $[\cdot,\cdot]$ denotes the commutator of two matrices. The variance of the partial derivative is thus given by
{\footnotesize\begin{equation}
    \begin{split}
        Var_{\theta}[\partial_l C] &= \int_{\mathbf{U}_+}\mathrm{d}\mathbf{U}_+ \int_{\mathbf{U}_-}\mathrm{d}\mathbf{U}_- \big(\partial_l C(\theta)\big)^2\\
        &= \int_{\mathbf{U}_+}\mathrm{d}\mathbf{U}_+ \int_{\mathbf{U}_-}\mathrm{d}\mathbf{U}_- \Big(\text{i}\Tr\big(\mathbf{U}_-\rho \mathbf{U}_-^{\dagger} [\mathbf{H}_l,O_+]\big)\Big)^2\\
        &= -\int_{\mathbf{U}_+}\mathrm{d}\mathbf{U}_+ \Bigg( \frac{\Tr(\rho^2)\Tr\big([H_l,O_+]^2\big)}{d_k^2-1} \\
        &\quad\quad - \frac{\Tr^2(\rho)\Tr\big([H_l,O_+]^2\big)}{d_k(d_k^2-1)} \Bigg)\\
        &= -\int_{\mathbf{U}_+}\mathrm{d}\mathbf{U}_+ \Bigg(\Tr\big( [\mathbf{H}_l,O_+]^2\big) \frac{d_k\Tr(\rho^2)-\Tr^2(\rho)}{d_k(d_k^2-1)}\Bigg)\\
        &= -\frac{d_k\Tr(\rho^2)-\Tr^2(\rho)}{d_k(d_k^2-1)}\int_{\mathbf{U}_+}\mathrm{d}\mathbf{U}_+ \Tr\big( [\mathbf{H}_l,O_+]^2\big) .
    \end{split}
\end{equation}}
As long as the initial state is the $SU(d_k)$ subspace, we have $\Tr(\rho)=1$ and $\Tr(\rho^2)=1$.
{\small\begin{equation}\label{eq:C99}
    \begin{split}
        Var_\theta[\partial_lC] &= -\frac{1}{d_k(d_k+1)}\int_{\mathbf{U}_+}\mathrm{d}\mathbf{U}_+ \Tr\big( [H_l,O_+]^2\big)\\
        &= -\frac{2}{d_k(d_k+1)}\Big(\Tr(\mathbf{H}_lO_+\mathbf{H}_lO_+) \\
        &\quad\quad -\Tr(\mathbf{H}_l\mathbf{H}_lO_+O_+)\Big)\\
        &= -\frac{2}{d_k(d_k+1)}\Bigg( \frac{\Tr(\mathbf{H}^2_l)\Tr^2(O)}{d_k^2-1} \\
        &\quad\quad -\frac{\Tr(\mathbf{H}^2_l)\Tr(O^2)}{d_k(d_k^2-1)}-\frac{\Tr(\mathbf{H}^2_l)\Tr(O^2)}{d_k}\Bigg)\\
        &= -\frac{2\Tr(\mathbf{H}_l^2)}{d_k(d_k+1)}\Bigg(\frac{\Tr^2(O)-d_k \Tr(O^2)}{d_k^2-1}\Bigg),
    \end{split}
\end{equation}}
where $\Tr(\mathbf{H}_l^2)=2\binom{n-2}{k-1}=\frac{2k(n-k)}{n(n-1)}d_k$. Without loss of generality, we set the observable $O$ as $Z_0$ since other observables will also hold with the same magnitude. Thus, $\Tr(O)=\frac{d_k(n-2k)}{n}$ and $\Tr(O^2)=d_k$. Substitute this back into Equation~\ref{eq:C99} and we get
{\small\begin{equation}
    \begin{split}
        Var_{\theta}[\partial_l C]&=-\frac{2}{d_k(d_k+1)}\cdot\frac{2k(n-k)d_k}{n(n-1)}\cdot\frac{\frac{d_k^2(n-2k)^2}{n^2}-d_k^2}{d_k^2-1}\\
        &=\frac{4k(n-k)}{(d_k+1)n(n-1)}\cdot \frac{d_k^2(n^2-(n-2k)^2)}{(d_k^2-1)n^2}\\
        &=\frac{4k(n-k)}{(d_k+1)n(n-1)}\cdot\frac{d_k^2(4nk-4k^2)}{(d_k^2-1)n^2}\\
        &=\frac{16k^2(n-k)^2 d_k^2}{(d_k+1)n^3(n-1)(d_k^2-1)}\\
        &\approx \frac{16 k^2(n-k)^2}{n^4 d_k}.
    \end{split}
\end{equation}}
\end{proof}

Theorem~\ref{theory:train} supports our conjecture that the trainability of the circuit is governed by the subspace dimensionality $d_k$ rather than the exponentially large qubit number $2^n$. Upon further analysis, we find that if $k=1$, then $Var_\theta[\partial_lC]\approx \frac{16}{n^3}$. Conversely, when $k=\frac{n}{2}$, $Var_\theta[\partial_lC]\approx \binom{n}{n/2}^{-1}$, which is still approximate to exponentially small.

While this indicates that barren plateaus (BPs) are not entirely eliminated in the $k\approx n/2$ regime, it is imperative to contextualize this $\mathcal{O}(1/d_k)$ scaling within the fundamental expressivity-trainability tradeoff~\cite{cerezo2021cost, larocca2023theory}. Compared to unconstrained HEAs, whose gradients vanish as $\mathcal{O}(1/2^n)$, our HWP ansatz exponentially suppresses the severity of BPs by exclusively confining the exploration to the physically valid manifold. Furthermore, compared to Hamiltonian-variational methods that might empirically exhibit better trainability, our $\mathcal{O}(1/d_k)$ bound represents the optimal theoretical limit for any truncation-free, subspace-universal ansatz. If a quantum circuit were to completely avoid exponential gradient vanishing (e.g., exhibiting at-worst polynomial variance), its DLA would necessarily be polynomially bounded. According to well-established classical simulability theories~\cite{vidal2003efficient,somma2006efficient,ragone2024lie}, such highly restricted circuits fundamentally lack the representational capacity to capture highly entangled, strongly correlated many-body states, rendering them efficiently simulable on classical computers.

% As highlighted in~\cite{cerezo2021cost}, there exists a tradeoff between expressivity and trainability: more expressive ansätze might offer greater representational power, but they also suffer from diminished trainability, as optimization becomes increasingly difficult due to flat cost landscapes. Therefore, designing ansätze that operate within a restricted subspace, such as the HWP ansatz, may offer a potential solution to this tradeoff.

\subsection{Universal 2-Local HWP Ansatz}
In this section, we detail the construction of the universal HWP ansatz, guided by the algebraic frameworks established in preceding sections. The necessary and sufficient conditions derived in Theorem \ref{thm:main} and Corollary \ref{coro:NN} provide a rigorous blueprint for designing universal operators under both FC and NN connectivity. Notably, none of the heuristic gates summarized in Figure \ref{fig:HWPgates} satisfy these strict universality conditions. Typical operators derived from physical hopping terms are composed of a single basis matrix; for instance, the Givens Rotation (GR) widely used in electronic structure simulations relies solely on $\mathbf{J}_{ij}$, while the XY-interaction in the Fermi-Hubbard model relies entirely on $\mathbf{R}_{ij}$. While these operators efficiently preserve the Hamming weight, they fundamentally restrict the system's evolution to isolated, low-dimensional invariant sub-manifolds (e.g., restricting to the real-valued orthogonal group $SO(d_k)$ in the case of GR). This inherent algebraic limitation strictly prevents these ansätze from spanning the complete special unitary group $SU(d_k)$ required for absolute subspace universality.

To break these invariant sub-algebras, an optimal operator must intrinsically couple the amplitude exchange with localized complex phase modulations. Guided by Condition (1) in Corollary~\ref{coro:NN}  with $e=r=j=1$ and incorporating normalization coefficients, we construct the following operator that explicitly resolves this deficit. This operator has never been discovered in any Hamiltonian before, and we denote this operator as the BS gate
\begin{equation}
    \mathbf{H}_{BS}= \left(\begin{array}{cccc}
        0 & 0 & 0 & 0\\
        0 & \frac{1}{2} & \frac{1+\text{i}}{2\sqrt{2}} & 0\\
        0 & \frac{1-\text{i}}{2\sqrt{2}} & \frac{1}{2}& 0\\
        0 & 0 & 0 & 0
    \end{array}\right),
\end{equation}
with $\mathbf{H}_{BS}=\mathbf{H}_{BS}^2$. We obtain the unitary matrix of the BS gate by Taylor expansion:
{\begin{equation}    
    \mathbf{U}_{BS}(\theta)=e^{\text{i}\theta \mathbf{H}_{BS}}= \left(\begin{array}{cccc}
        1 & 0 & 0 & 0\\
        0 & \frac{(e^{\text{i}\theta}+1)}{2} & \frac{(1+\text{i})(e^{\text{i}\theta}-1)}{2\sqrt{2}} & 0\\
        0 & \frac{(1-\text{i})(e^{\text{i}\theta}-1)}{2\sqrt{2}} & \frac{(e^{\text{i}\theta}+1)}{2} & 0\\
        0 & 0 & 0 & 1
    \end{array}\right),
\end{equation}}
where $e^{\text{i}\theta}=\cos(\theta)+\text{i}\sin(\theta)$. 
A hardware-efficient decomposition of the BS gate is illustrated in Figure~\ref{fig:method}a, requiring three CNOT gates alongside parameterized single-qubit rotations. This hardware efficiency is defined in explicit contrast to Hamiltonian variational ansätze like UCCSD, where operators with order higher than two are required. Specifically, the 4-local double excitation operators in UCCSD are intrinsically non-local across the qubit register. Implementing these complex operators on near-term hardware with restricted physical connectivity necessitates prohibitive SWAP gate overheads. In contrast, the proposed BS ansatz is strictly 2-local and inherently tailored for NN connectivity.

Building upon this operator, Figure~\ref{fig:method}b presents two ansatz architectures tailored to both NN and FC physical qubit connectivities. While advanced quantum platforms, such as trapped ions or reconfigurable neutral atom arrays, can natively support or dynamically simulate long-range interactions, executing dense all-to-all connectivity in deep parameterized circuits still imposes severe physical overheads. Consequently, we deliberately constrain our primary focus to the NN architecture. Establishing theoretical universality and empirical expressivity under the strict topological bottleneck of 1D NN connectivity guarantees that the proposed BS ansatz remains universally executable and highly resource-efficient across all major quantum physical backends, including rigidly constrained superconducting circuits. Furthermore, the mathematically derived BS gate is inherently directional (i.e., $\mathbf{U}_{ij}\neq\mathbf{U}_{ji}$). To systematically break this directional bias and maximize the algebraic expressivity across the qubit lattice, we construct the ansatz by alternating layers of forward BS gates with their reversed counterparts. This alternating topological arrangement is strictly required to ensure that the generated DLA achieves the maximum dimensionality required for absolute subspace universality.

A significant conceptual advancement of the proposed BS ansatz is its complete decoupling from the target problem Hamiltonian. Conventional VQE circuit designs are largely phenomenological: the UCCSD ansatz is tightly tied to the molecular electron-excitation Hamiltonian, whereas the EHV ansatz is tailored to the hopping and on-site terms of the Fermi-Hubbard model. 
Consequently, these heuristic ansätze must be laboriously re-engineered for different physical systems, and their expressivity is inherently bounded by the truncated terms of the specific Hamiltonian. 
In contrast, the BS ansatz is derived purely from the underlying mathematical constraint, the Hamming Weight preservation, independent of any specific interaction coefficients. 
This represents a paradigm shift from Hamiltonian-driven circuit design to Constraint-driven algebraic compilation. 
By mathematically guaranteeing complete controllability over the constrained subspace, the exact same untailored BS ansatz architecture can serve as a universal solver for any fermionic many-body system, seamlessly bridging disparate domains such as molecular electronic structures and condensed-matter models.

\begin{figure*}[t]
    \centering
    \includegraphics[width=0.95\linewidth]{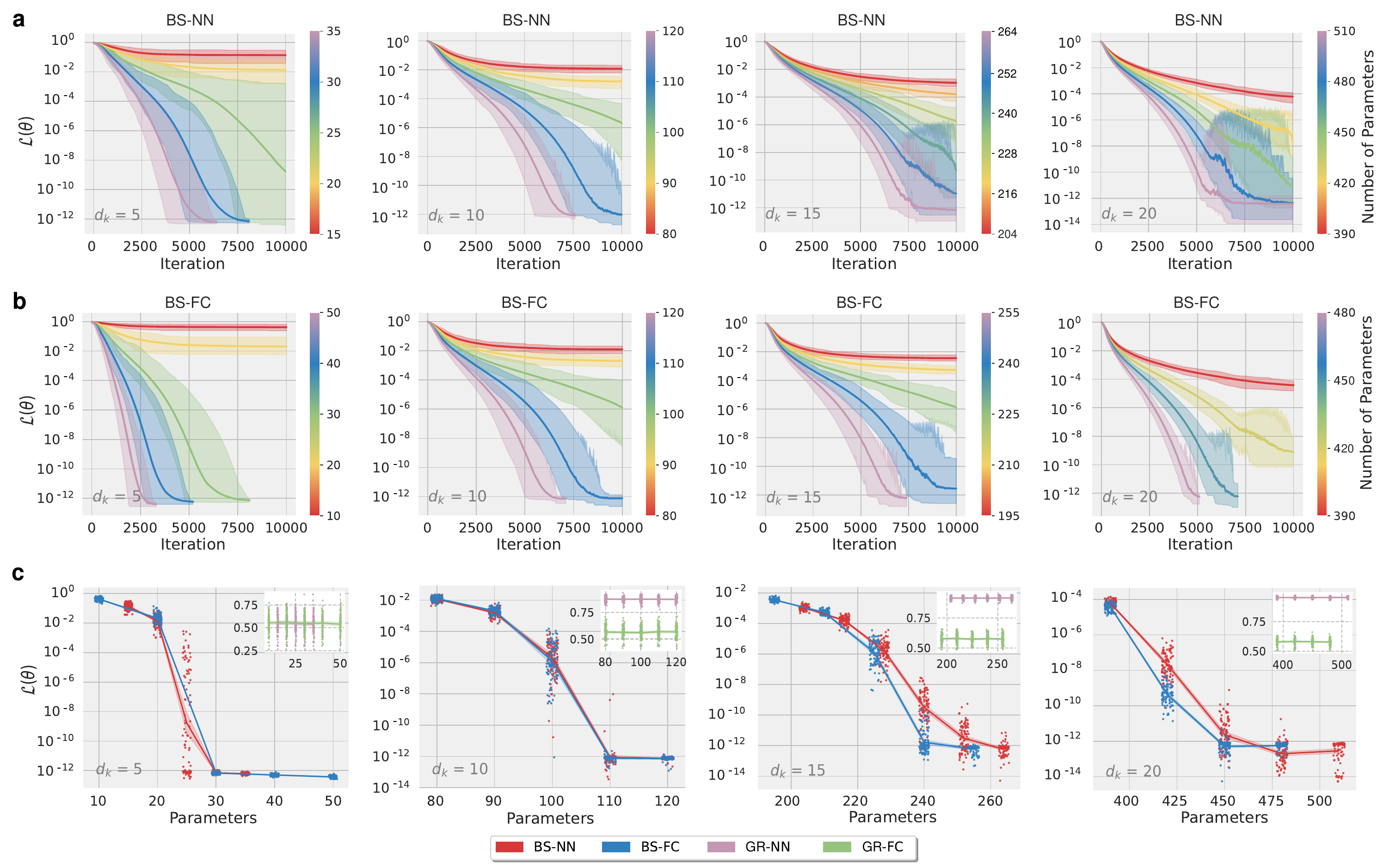}
    \caption{\textbf{Results for unitary approximation.} We iterate through all the cases for $d_k=\{\tbinom{5}{1},\tbinom{5}{2},\tbinom{6}{2},\tbinom{6}{3}\}=\{5,10,15,20\}$, with both GR and BS gates for NN and FC connectivity. For each $d_k$, 100 unitary matrices are randomly sampled based on Haar measure~\cite{zyczkowski1994random,mezzadri2006generate}. \textbf{a}. The training curves for BS with NN connectivity. \textbf{b}. The training curves for BS with FC connectivity. \textbf{c}. The loss function is plotted versus the number of parameters. Both BS-NN and BS-FC show a similar decreasing pattern with the number of parameters needed for exact approximation at around $d_k^2$. Inset shows the results for GR-NN and GR-FC, with neither method getting close to $\mathcal{L}(\theta)=0$.}
    \label{fig:unitary_approximation}
\end{figure*}

\section{Experiments}
Experiments were conducted on a machine with 2TB memory, 100 cores Intel Xeon Platinum 8480+ CPU, and 8 GPUs (Nvidia A100). All numerical simulations presented in this work were implemented using the MindSpore Quantum framework~\cite{xu2024mindspore}.

\begin{figure*}[t]
    \centering
    \includegraphics[width=0.9\linewidth]{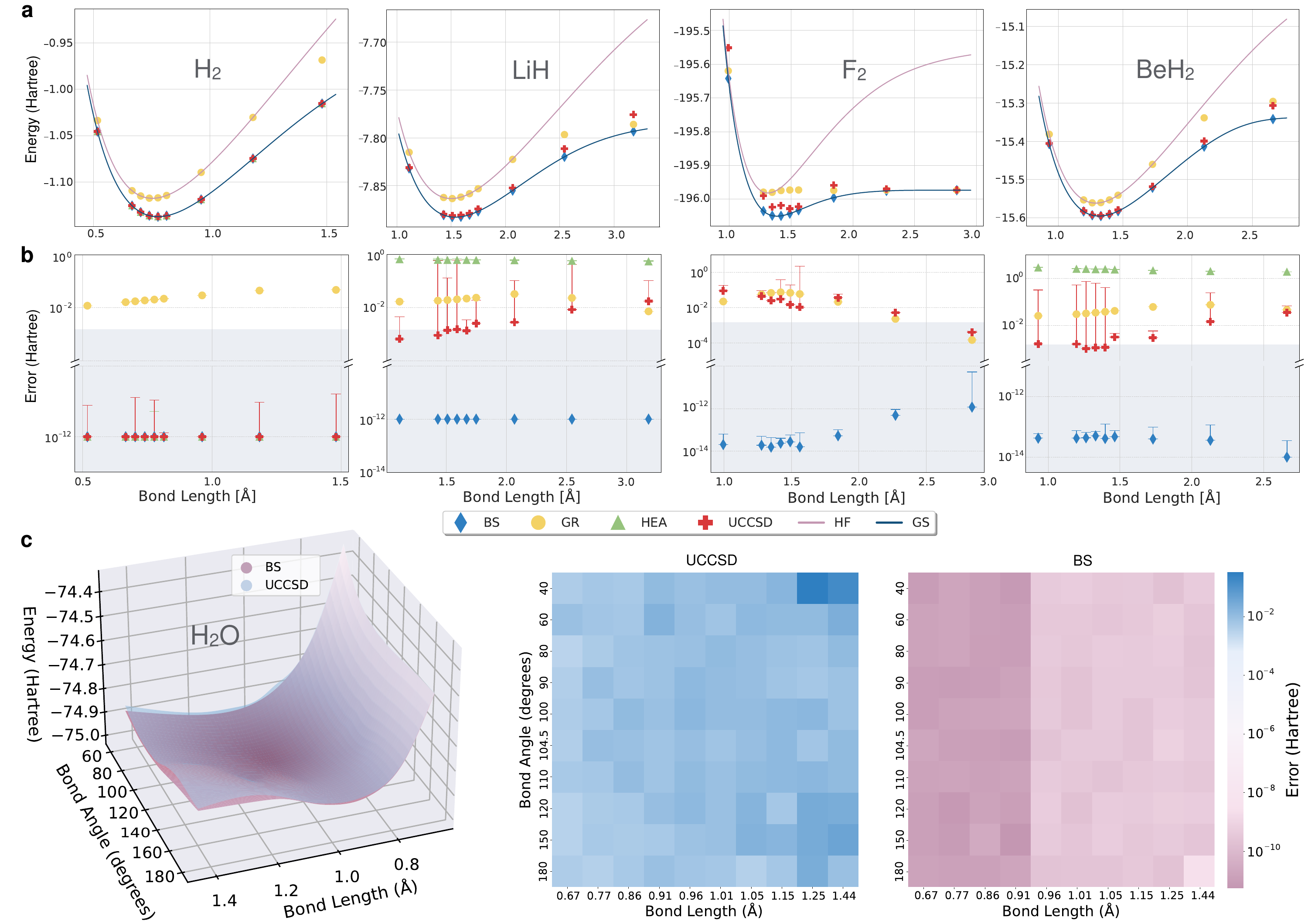}
    \caption{\textbf{Results for simulating molecular electronic structures.} \textbf{a}. Potential energy curves of four molecules w.r.t. the bond length, with the corresponding absolute errors compared to the exact energy shown in \textbf{b}. HF stands for the energy with Hartree-Fock state, and GS stands for the exact energy. The grey region shows results within chemical accuracy (less than 1.6 milli-hartrees). \textbf{c}. Potential energy surface of H$_2$O, and the energy error of $100$ sampled molecule structures for both UCCSD and BS. All data points are a minimum of 10 random seeds, with error bars indicating the range from minimum to maximum.}
    \label{fig:UCC}
\end{figure*}
\subsection{Numerical Results for Unitary Approximation}
\subsubsection{Problem Definition}
To verify the theoretical results on the universality of different gates and connectivity, we provide numerical results on the unitary approximation problem~\cite{nielsen2002quantum,chong2017programming,khatri2019quantum}. 
The unitary approximation aims to solve the problem of whether $\{\mathbf{U}(\bm{\theta})\}_{\bm{\theta}}$ is equal to $\mathcal{SU}(N)$ (see lemma.~\ref{lemma:control}). For a target unitary matrix $\hat{\mathbf{U}}$ in $d_k$-dimensional HWP subspace, the loss function for unitary approximation is
\begin{equation}\label{eq:UAloss}
    \mathcal{L}_{UA}(\bm{\theta})=1-|\Tr\big(\hat{\mathbf{U}}^\dagger \mathbf{U}(\bm{\theta})\big)|^2/d_k^2.
\end{equation}

To illustrate ansatz with DLA dimension less than $d_k^2$ can not approximate an arbitrary unitary matrix to any desired accuracy, we also provide results for the Givens Rotations (GR)~\cite{wecker2015solving,jiang2018quantum}, 
\begin{equation}
    \mathbf{H}_{GR}=\left(\begin{array}{cccc}
        0 & 0 & 0 & 0\\
        0 & 0 & -\text{i} & 0\\
        0 & \text{i} & 0 & 0\\
        0 & 0 & 0 & 0
    \end{array}\right),
\end{equation}
which is widely used in Fermionic system simulation. Figure~\ref{fig:unitary_approximation} shows the results of minimizing the loss in Equation~\ref{eq:UAloss} for four cases with HWP subspace dimension $d_k=\{\tbinom{5}{1},\tbinom{5}{2},\tbinom{6}{2},\tbinom{6}{3}\}=\{5,10,15,20\}$. 

\subsubsection{Results}
Our numerical results demonstrate that the theoretically derived BS gate can approximate a target unitary matrix with arbitrary precision, achieving a loss function value as low as $1 \times 10^{-12}$ (see Figure~\ref{fig:unitary_approximation}a and b). In Figure~\ref{fig:unitary_approximation}c, we plot the final approximation loss for 100 randomly sampled unitary matrices across various parameter counts. The training dynamics of the NN and FC configurations exhibit highly similar convergence trajectories. Because the number of constituent gates per layer inherently differs between the NN and FC topologies, their total parameter counts cannot be perfectly aligned unless they share a common multiple. Nevertheless, the overall macroscopic trends remain remarkably consistent. 

Notably, the loss distribution for the NN configuration exhibits a slightly wider variance compared to the FC configuration. This subtle discrepancy suggests that while NN and FC share identical algebraic expressivity (equivalent DLA dimensions), their underlying optimization landscapes may possess distinct topological features and curvature properties. In stark contrast, other commonly used HWP operators, such as the GR gate, fundamentally fail to approximate arbitrary unitary matrices. As depicted in the inset of Figure~\ref{fig:unitary_approximation}c, the approximation loss for the GR ansatz stagnates near 0.5 or 0.75, showing no notable improvement even in the ultra-deep, extensively parameterized regime. This limitation indicates an inherent inability to span the complete target subspace, a phenomenon theoretically analogous to the truncation error observed in the coupled cluster theorem when restricted solely to single-excitation operators. Ultimately, these findings corroborate our theoretical predictions: achieving a truncation-free, exact approximation does not strictly necessitate complex high-order non-local terms. Instead, it can be efficiently realized using mathematically validated 2-local HWP operators.

Furthermore, Figure~\ref{fig:unitary_approximation}c illustrates the critical number of parameters (and consequently, circuit depth) required for the ansatz to successfully converge to this exact approximation. The sharp downward transition observed in the loss curves evokes a well-documented phenomenon in both classical and quantum machine learning, e.g., overparameterization~\cite{allen2019convergence}, in which spurious local minima vanish and the optimization landscape becomes highly navigable. Our empirical results demonstrate that an analogous computational phase transition occurs within the constrained HWP quantum subspace. Specifically, we validate the quantum overparameterization theory~\cite{larocca2023theory}, confirming that the critical parameter threshold required to guarantee trainability is intrinsically proportional to the DLA dimension of the ansatz, which, in our framework, scales strictly as $d_k^2$. By confining this overparameterization bound to the dimension of the symmetry-preserved subspace rather than the exponentially large full Hilbert space, our approach significantly suppresses the resource overhead required to escape barren plateaus. These insights fundamentally bridge quantum control theory with classical optimization dynamics, offering a theoretically sound blueprint for designing highly expressive and trainable HWP ansätze for practical quantum simulation tasks.

\begin{figure}[t]
    \centering
    \includegraphics[width=0.95\linewidth]{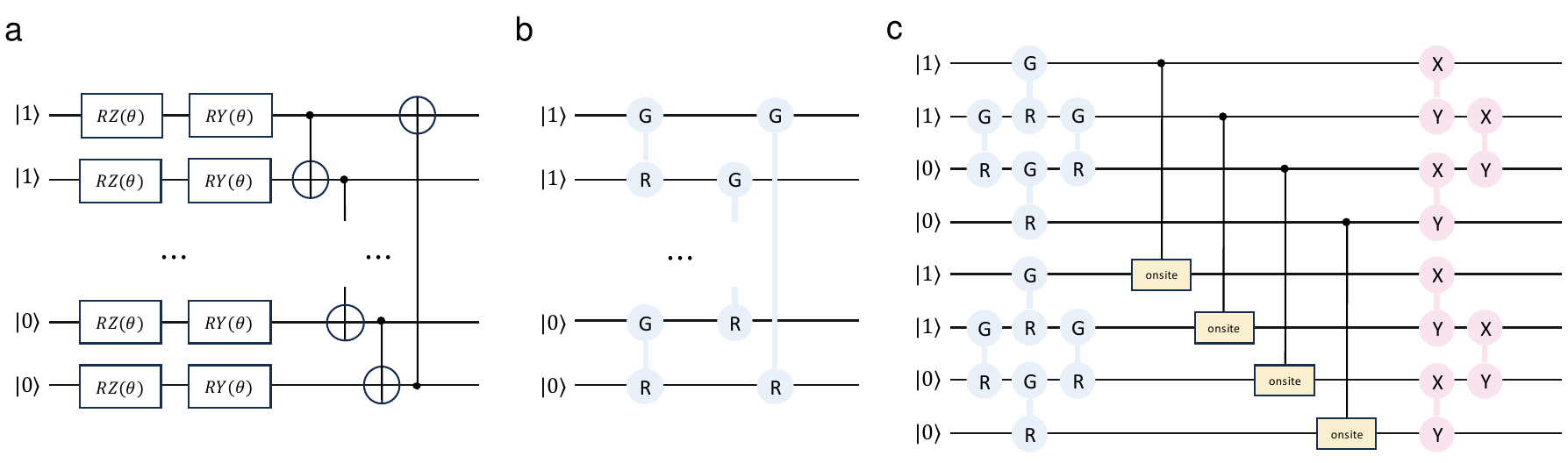}
    \caption{\textbf{Ansatz of Baseline Methods.} \textbf{a}. One layer of HEA ansatz utilizing parameterized rotation Y and Z gates and CNOT gates. The first $k$ (number of electrons) qubits are initialized to $\ket{1}$ as the HF state. \textbf{b}. One layer of GR ansatz with $n$ (number of orbitals) GR gates connecting neighboring qubits. \textbf{c}. An example for EHV ansatz for 1*4 sites Fermi-Hubbard model at half filling. Top four qubits for spin-ups and the rest for spin-downs. The GR gates are utilized for initialization, and the EHV ansatz requires repeatedly applying the onsite layer and XY-interaction layer.} 
    \label{fig:baseline}
\end{figure}
\begin{table}[t]
    \centering
    \caption{\textbf{Statistics of molecules.} $n$ and $k$ are the numbers of orbitals and electrons, respectively.}
    \label{tab:mole-stat}
    \resizebox{0.9\linewidth}{!}{\begin{tabular}{l | ccccc}
        \hline
        Molecules &  H$_2$  & LiH & H$_2$O & BeH$_2$ & F$_2$\\ 
        \hline
        $n$   & 4 &  12 & 14 & 14 & 20 \\
        $k$ & 2   & 4 & 10 & 6 & 18\\
        $d_k$ & 6  & 495 & 1001 & 3003 & 190\\
        \hline
        \end{tabular}}
\end{table}
\subsection{Numerical Results for Solving Electronic Structures}
\subsubsection{Probelm Definition}
We then leverage the proposed BS ansatz to address the challenges of quantum chemistry simulation~\cite{kandala2017hardware,anand2022sukin,guo2024experimental}, focusing on molecular Hamiltonians within the Born-Oppenheimer approximation and expressed in second quantization. These Hamiltonians, obtained from the Python package PySCF~\cite{sun2018pyscf}, are transformed into qubit form using the minimal basis set STO-3G and the Jordan-Wigner transformation. The Hamiltonian in second quantization is given by:
\begin{equation}
    \mathcal{H}=\sum_{i,j}h_{ij}\hat{a}_j^\dagger\hat{a}_i+\frac{1}{2}\sum_{i,j,p,q}g_{ijpq}\hat{a}_j^\dagger\hat{a}_q^\dagger\hat{a}_p\hat{a}_i,
\end{equation}
where $\hat{a}_i=\begin{pmatrix}\begin{smallmatrix}
    0&0\\1&0
\end{smallmatrix}\end{pmatrix}$ denotes the annihilation operator on qubit $i$, corresponding to the $i$-th molecular orbital in the active space. The coefficients $h_{ij}$ and $g_{ijpq}$ denote the one- and two-electron integrals, respectively. A well-established solving approach is the VQE method, with the chemically inspired unitary coupled cluster (UCC) ansatz~\cite{romero1701strategies,anand2022sukin,bartlett1989alternative}.
The coupled cluster operator, in second quantization, is defined as $\hat{T}=\hat{T}_1+\hat{T}_2+\cdots+\hat{T}_v$ with the single and double excitation operators as
\begin{equation}
    \begin{aligned}
        \hat{T}_1=\sum_{i,j}\hat{a}_j^\dagger\hat{a}_i,\quad\quad
        \hat{T}_2=\sum_{i,j,p,q}\hat{a}_j^\dagger\hat{a}_q^\dagger\hat{a}_p\hat{a}_i.
    \end{aligned}
\end{equation}
For a given reference initial state $\ket{\psi_0}$, the UCC ansatz wave function is given by $\ket{\psi}=e^{\hat{T}-\hat{T}^\dagger}\ket{\psi_0}$~\cite{bartlett1989alternative}, where $T-T^\dagger$ is an anti-Hermitian operator which makes it suitable for quantum computers since the exponential of an anti-Hermitian operator is a unitary operator. The Hamiltonian for the single excitation term is 
\begin{equation}
    \mathcal{H}_{single}=\frac{1}{\rm i}(\hat{T}-\hat{T}^\dagger)=\frac{1}{\rm i}\begin{pmatrix}\begin{smallmatrix}
        0 & 0 & 0 & 0\\
        0 & 0 & 1 & 0\\
        0 & -1 & 0 & 0\\
        0 & 0 & 0 & 0
    \end{smallmatrix}\end{pmatrix}=\begin{pmatrix}\begin{smallmatrix}
        0 & 0 & 0 & 0\\
        0 & 0 & -\text{i} & 0\\
        0 & \text{i} & 0 & 0\\
        0 & 0 & 0 & 0
    \end{smallmatrix}\end{pmatrix},
\end{equation}
which is indeed the GR we utilize as the baseline in unitary approximation. The Hamiltonian for the double excitation operator, on the other hand, is more complex, involving a $16\times 16$ matrix. This increased complexity significantly challenges implementation on quantum devices. 

\begin{figure}[t]
    \centering
    \includegraphics[width=0.9\linewidth]{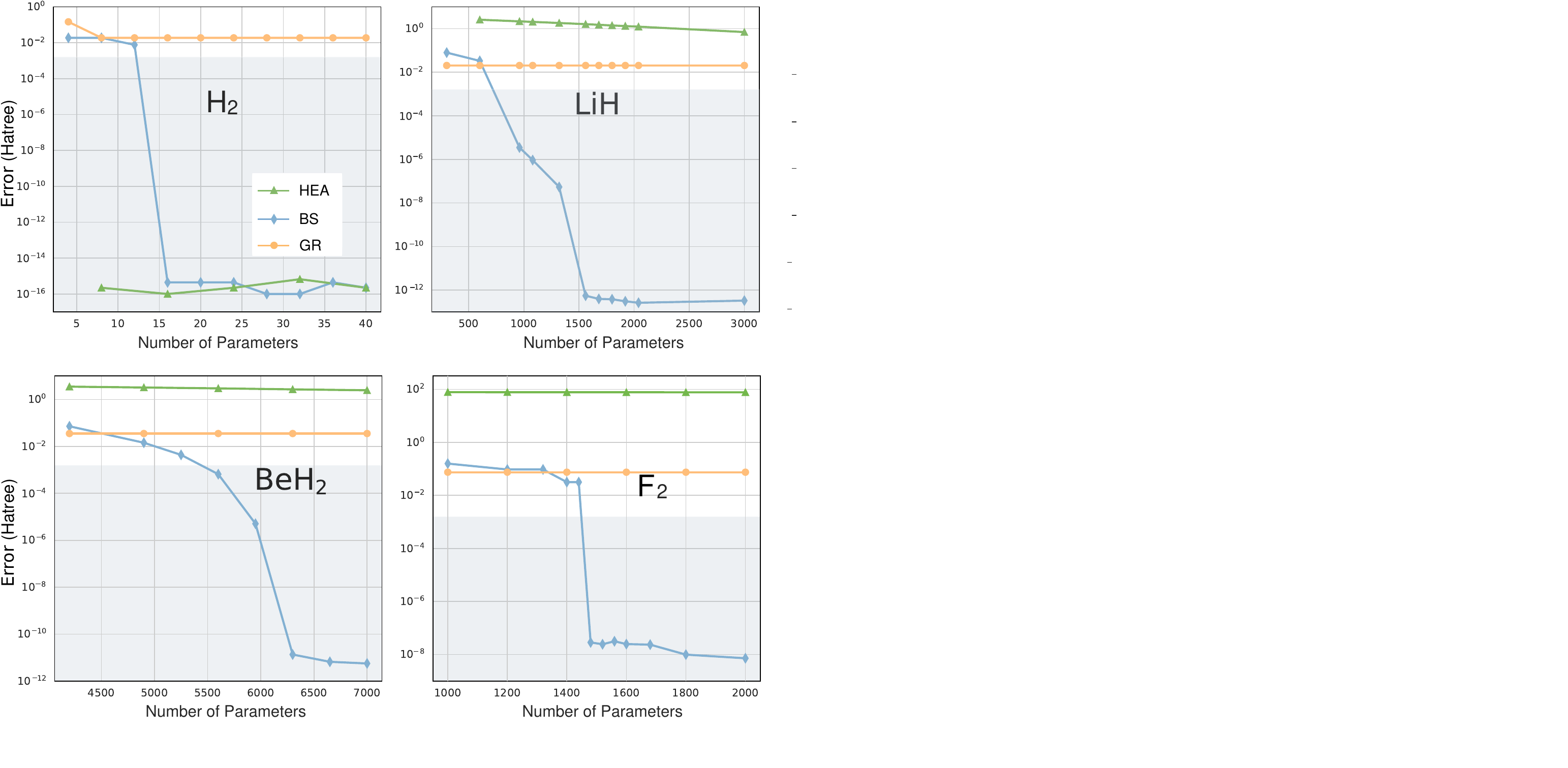}
    \caption{\textbf{Overparameterization for simulating molecular electronic structures.} The energy error w.r.t. the number of parameters. Grey area indicating error within Chemical Accuracy.}
    \label{fig:apxUCC}
\end{figure}

We conduct numerical experiments on five molecules with detailed statistics in Table~\ref{tab:mole-stat}. For H$_2$, LiH, F$_2$, and BeH$_2$, the data points across varying bond lengths are obtained by iterating over the coefficients in the set $\{0.7,0.9,0.95,1,1.05,1.1,1.3,1.6,2\}$ time the minimum-energy bond length. This setting focuses on both the performance near the minimum-energy point and at extended bond lengths. The maximum number of iterations is capped at 10,000, and all the results are a minimum of 10 random seeds, with the error bar indicating the range of 10 seeds. Since the number of layers is kept identical across all bond lengths, the results show that increasing bond length leads to greater variance in the BS ansatz, suggesting that the problem becomes more challenging as bond lengths increase.
For water, we explore the potential energy surface with respect to both bond length and bond angle. The bond length coefficients are sampled from the set $\{0.7,0.8,0.9,0.95,1,1.05,1.1,1.2,1.3,1.5\}$,  differing from the other molecules to achieve a more evenly distributed bond lengths. The bond angles are selected from $\{40^\circ,60^\circ,80^\circ,90^\circ,100^\circ,104.5^\circ,110^\circ,120^\circ,150^\circ,180^\circ\}$.
To better illustrate the efficiency of the proposed BS ansatz, we select HEA~\citep{kandala2017hardware} and UCCSD~\citep{romero1701strategies} ansatz as the baselines (implementation details in Figure~\ref{fig:baseline}). We also involve GR, representing the single excitation term of the UCC model~\cite{romero1701strategies}. Both BS and GR are under NN connectivity.

\subsubsection{Results}
As illustrated in Figure~\ref{fig:UCC}, the HEA (green triangles) exhibits the poorest performance due to its unconstrained search within the full Hilbert space, which leads to barren plateaus and symmetry-breaking solutions. The GR ansatz (yellow squares), while symmetry-preserving, is restricted to one-body interactions and fails to capture electron correlations essential for chemical accuracy.

The most significant insight emerges from the comparison between UCCSD and our BS ansatz. Contrary to the conventional wisdom in computational chemistry, which posits that high precision requires high-rank excitation operators (e.g., triples or quadruples), 2-local operators are believed insufficient for describing complex correlation. However, our results demonstrate a fundamental shift in perspective under the qubit encoding. 
While UCCSD (red crosses) relies on a chemically inspired but mathematically truncated operator pool (Singles and Doubles), the problem on a quantum circuit transforms into one of state reachability within a constrained subspace. Although constructed from simple 2-local gates, the BS ansatz acts as a universal building block. Through the lens of DLA, we prove that a sufficiently deep BS ansatz can generate the complete algebra of the HWP subspace. This effectively achieves a "truncation-free" approximation, allowing it to access any quantum state within the subspace—equivalent to Full Configuration Interaction (FCI)~\cite{knowles1984new}, despite using only hardware-efficient 2-local interactions.

This theoretical advantage translates directly into the numerical superiority shown in Figure~\ref{fig:UCC}. The BS ansatz (blue diamonds) consistently maintains energy errors below $1 \times 10^{-10}$ Hartree across all bond lengths, including the strongly correlated regions where the truncated UCCSD falters (errors $\sim 10^{-3}$ Hartree). This confirms that subspace universality, rather than the explicit inclusion of high-order physical operators, is the decisive factor for accuracy in quantum simulation.

In Figure~\ref{fig:apxUCC}, we further examine the impact of the number of parameters. Both the HEA and GR ansatz exhibit a consistent trend, indicating that they have reached the maximum expressibility of the ansatz but remain unable to approximate the ground state with sufficient accuracy. Notably, our findings reveal that the number of parameters required to reach overparameterization for molecular electronic structures is significantly smaller than expected, diverging from the anticipated $d_k^2$. For an error margin close to chemical accuracy, fewer than $2 \times d_k$ parameters are sufficient. This observation is particularly impactful, as it suggests that solving a VQE for Fermionic system simulation may require a number of parameters that scales linearly with $d_k$, offering valuable insights into the potential advantages of the VQE algorithm on intermediate-scale quantum processors.
% Our experiments demonstrate that the ansatz constructed from BS gates achieves an error level of $1\times 10^{-10}$Ha for molecular ground state energies, consistent across all tested bond lengths and bond angles. This precision significantly exceeds that of existing VQE methodologies. Furthermore, in our unitary approximation task, we established that any unitary matrix within the HWP subspace can be accurately approximated by our proposed BS-gate ansatz. Consequently, this implies that any state within the HWP subspace is accessible through our ansatz, effectively rendering it a quantum circuit-based implementation of the FCI~\cite{knowles1984new} method.

\begin{figure*}[t]
    \centering
    \includegraphics[width=0.95\linewidth]{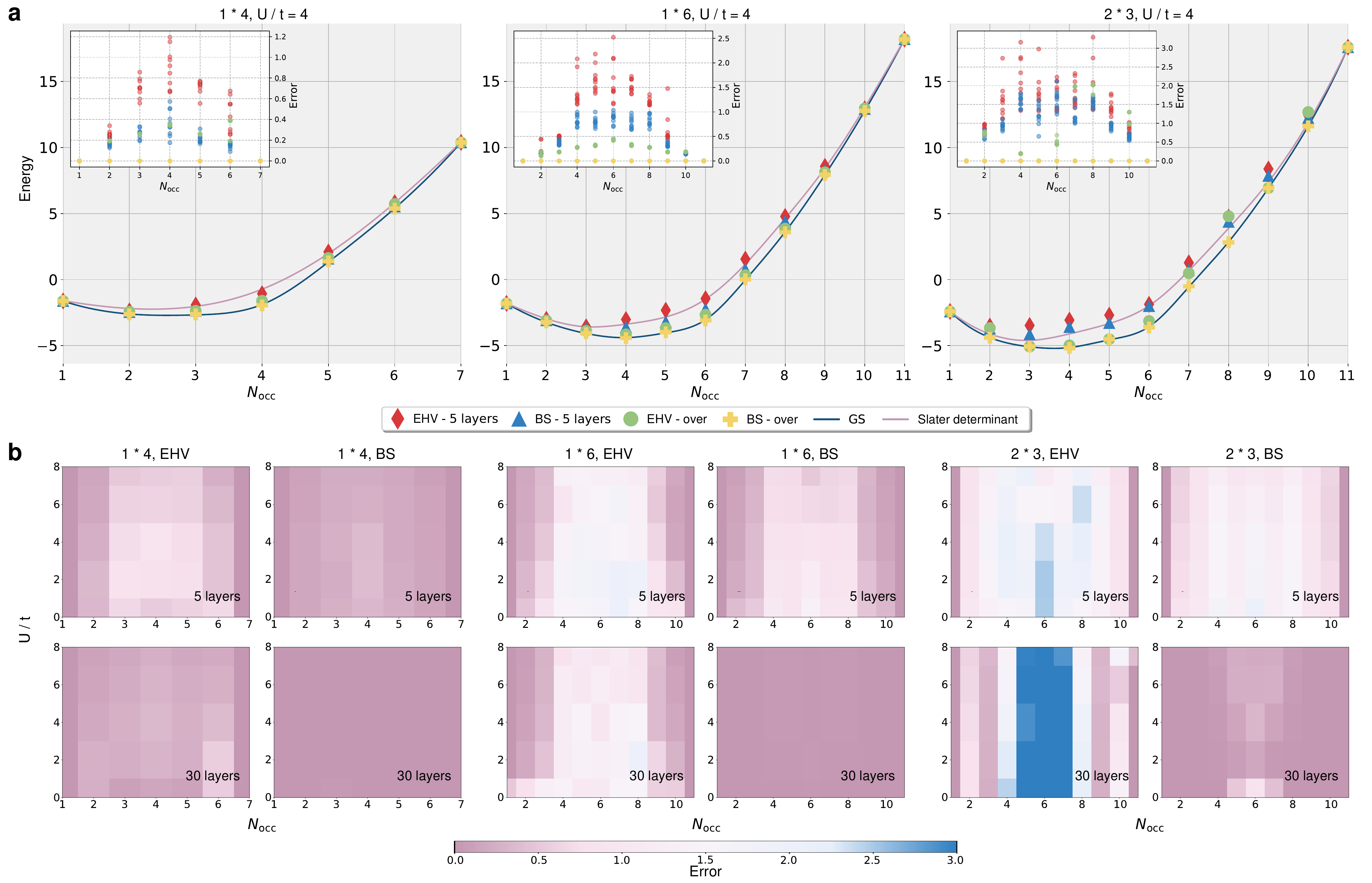}
    \caption{\textbf{Results for the Fermi-Hubbard model.} We conduct experiments for Fermi-Hubbard model instances on lattices of shapes $1\times 4$, $1\times 6$, and $2\times 3$.
    \textbf{a}. Energy of the Fermi-Hubbard model (at $U/t=4$) w.r.t. integer number of occupations. The inset shows the energy error with five random seeds for each method. The "over" in EHV-over and BS-over represents the number of layers beyond the overparameterization bound. \textbf{b}. The energy error of BS and EHV at low circuit depth. In all panels, the X-axis is the occupation number, and the Y-axis is the $U/t$.  All the results are an average of 5 random seeds.}
    \label{fig:Fermi-Hubbard}
\end{figure*}
\subsection{Numerical Results for Fermi-Hubbard Model}
\subsubsection{Problem Definition}
To further illustrate the ansatz-free HWP circuit is capable of solving various Fermionic systems, we conduct experiments on one more iconic model in condensed-matter physics~\cite{leblanc2015solutions}, the Fermi-Hubbard model~\cite{hubbard1963electron}. It is the simplest system that includes non-trivial correlations and has been widely proposed as an early target for variational quantum algorithms~\cite{wecker2015solving,jiang2018quantum,cade2020strategies,cai2020resource,anselme2022simulating,stanisic2022observing}. The Hamiltonian of the Fermi-Hubbard model can be written in the following form:
\begin{equation}\label{eq:Fermi-hubbard}
    H_{HF} = -t\sum_{i,j}\sum_\sigma t_{ij}(a^\dagger_{i\sigma}a_{j\sigma}+a^\dagger_{j\sigma}a_{i\sigma})+U\sum_ia^\dagger_{i\uparrow}a_{i\uparrow}a^\dagger_{i\downarrow}a_{i\downarrow},
\end{equation}
where $i,j$ denotes adjacent sites, and $\sigma\in\{\uparrow,\downarrow\}$ denotes the spin. Each site of the Fermi-Hubbard model contains two fermionic modes, and each mode is mapped to one qubit under the Jordan-Wigner transformation.  Unlike molecular Hamiltonians, the Fermi-Hubbard Hamiltonian is more complex as it includes both intra-site interactions and inter-spin sector couplings.
The first term in Equation~\ref{eq:Fermi-hubbard} is the hopping term, also known as the XY-interaction~\cite{bacon2001encoded,terhal2002classical}
\begin{equation}
    \mathcal{H}_{XY} = a^\dagger_ia_j+a_j^\dagger a_i = \begin{pmatrix}\begin{smallmatrix}
        0 & 0 & 0 & 0\\
        0 & 0 & 1 & 0\\
        0 & 1 & 0 & 0\\
        0 & 0 & 0 & 0
    \end{smallmatrix}\end{pmatrix} = \frac{1}{2}(\sigma_x\otimes\sigma_x+\sigma_y\otimes\sigma_y),
\end{equation}
which is similar to the GR representing the creation and annihilation of the electrons. The second term in Equation~\ref{eq:Fermi-hubbard} is the onsite term, adding a phase to the state $\ket{11}\bra{11}$:
\begin{equation}
    \mathcal{H}_{onsite} = \begin{pmatrix}\begin{smallmatrix}
        0 & 0 & 0 & 0\\
        0 & 0 & 0 & 0\\
        0 & 0 & 0 & 0\\
        0 & 0 & 0 & 1
    \end{smallmatrix}\end{pmatrix},
\end{equation}
which represents the repulsive Coulomb interaction when two electrons with opposite spins occupy the same site. 

Numerical results for Fermi-Hubbard model instances on lattices of shape $1\times 4$, $1\times 6$, and $2\times 3$ are presented in Figure~\ref{fig:Fermi-Hubbard}. Although the $1 \times 6$ and $2 \times 3$ configurations have the same number of sites, the $2 \times 3$ model allows for additional hopping interactions (both vertical and horizontal), making it fundamentally different from the 1D configuration. To assess the efficiency of the proposed HWP ansatz, we compared its performance against the ground state (GS, the exact energy), the Slater determinant state (classical ansatz)~\cite{slater1929theory,helgaker2013molecular}, and the Efficient Hamiltonian Variational (EHV) ansatz\cite{stanisic2022observing}, a VQE approach utilizing both terms in the problem Hamiltonian by applying NN-connected XY-interaction and onsite layers alternately to achieve the desired accuracy. 

\subsubsection{Results}
We first computed the energy for all instances across different occupation numbers at $U/t=4$, focusing on scenarios with both extremely low circuit depths (5 layers) and sufficiently deep circuits to achieve overparameterization (100 layers for 4 sites and 200 layers for 6 sites). Maintaining a relatively low circuit depth is crucial for exploring the practical utility. The results in Figure~\ref{fig:Fermi-Hubbard}a reveal that the BS ansatz consistently outperforms other baseline methods across all circuit depths, with errors below $1\times 10^{-10}$ in the overparameterized regime. We further examined the robustness of the BS ansatz by varying $U/t$ in Figure~\ref{fig:Fermi-Hubbard}b. We found that increasing the number of layers steadily improves its performance, whereas the EHV ansatz shows limited improvement. Additionally, it appears that the noninteracting ($U/t=0$) Fermi-Hubbard model poses more challenges for both VQE methods when underparameterized, likely due to the introduction of unnecessary phases. 
% Nonetheless, we demonstrate that the BS ansatz consistently outperforms existing Hamiltonian variational ansätze, delivering superior results even with a minimal number of layers.

The results in Figure~\ref{fig:Fermi-Hubbard} reveal a profound implication of our universality theorem. Despite the distinct physical origins of the Fermi-Hubbard and molecular Hamiltonians, they share the same underlying symmetry constraints (particle number conservation). Crucially, the efficacy of the BS ansatz is invariant to the specific form of the transition operators in the Hamiltonian. Whether the system is governed by molecular integrals or lattice hopping parameters, the solution resides within the same mathematical subspace.

As observed in Figure~\ref{fig:Fermi-Hubbard}a, while the EHV ansatz struggles to eliminate residual errors due to its reliance on a fixed Trotter-like structure, the BS ansatz achieves near-exact solutions ($\sim 10^{-10}$ error) in the overparameterized regime. The heatmaps in Figure~\ref{fig:Fermi-Hubbard}b further corroborate this robustness, showing that the BS ansatz maintains superior accuracy across the entire phase diagram, from metallic to Mott-insulating regimes (varying $U/t$). This reinforces the core insight that by ensuring complete controllability over the target subspace, the BS ansatz serves as a universal, problem-independent solver. It bypasses the need to tailor the ansatz structure to the specific interaction terms of the Hamiltonian, offering a unified and rigorous solution for diverse fermionic simulation tasks.

While our theoretical derivations rigorously establish that absolute subspace universality (i.e., zero approximation error) necessitates a circuit depth scaling with $d_k^2$, it is crucial to recognize that this $\mathcal{O}(d_k^2)$ bound represents the extreme worst-case scenario. For highly complex systems at half-filling, $d_k$ remains exponentially large, seemingly implying an intractable depth requirement. However, in practical machine learning and chemistry applications, achieving the absolute zero-error exact state is rarely strictly required. Instead, crossing a specific precision threshold (e.g., chemical accuracy, $1 \times 10^{-3}$ Ha) is practical for obtaining a chemically meaningful result. Our empirical evaluations, particularly the ultra-shallow (5 layers) Fermi-Hubbard experiments in Figure 8b, provide a profound insight into this practical regime. Even when operating far below the theoretical overparameterization threshold, the proposed BS ansatz demonstrates exceptional representational efficiency. By natively constraining the optimization exclusively to the physically valid subspace, the BS architecture effectively compresses the necessary solution manifold. Consequently, it consistently outperforms Hamiltonian-driven methods like EHV at identical, highly restricted circuit depths. This establishes a highly practical paradigm that while the BS ansatz offers a theoretical guarantee of exactness at exponential depth, it simultaneously delivers a superior, highly hardware-efficient heuristic approximation at extreme shallow depths.

\section{Conclusion}
This work rigorously establishes the necessary and sufficient conditions for an HWP ansatz to achieve subspace universality, along with a comprehensive analysis of its trainability. These conditions can be further extended to accommodate different physical qubit connectivity configurations and tailored for specific values of $n$ and $k$, enabling the design of HWP operators with simpler decompositions for broader applicability in real-world problems.
The proposed ansatz represents a significant step toward addressing the challenges of VQEs, striking a critical balance between expressivity and trainability while offering inherent error detection capabilities. Any deviation in Hamming weight serves as a direct indicator of bit-flip errors, making it highly suitable for noisy environments.
By providing a mathematically interpretable, noise-resilient framework, this ansatz presents a compelling pathway for advancing discussions of VQE supremacy in the near term. 

\section*{Acknowledgement}
The work was partly supported by NSFC (72342023) and CPS-Yangtze Delta Region Industrial Innovation Center of Quantum and Information Technology-MindSpore Quantum Open Fund.

\clearpage
\scriptsize
\bibliography{ref}
\bibliographystyle{ieeetr}

\begin{IEEEbiography}
[{\includegraphics[width=1in,height=1.25in,clip,keepaspectratio]{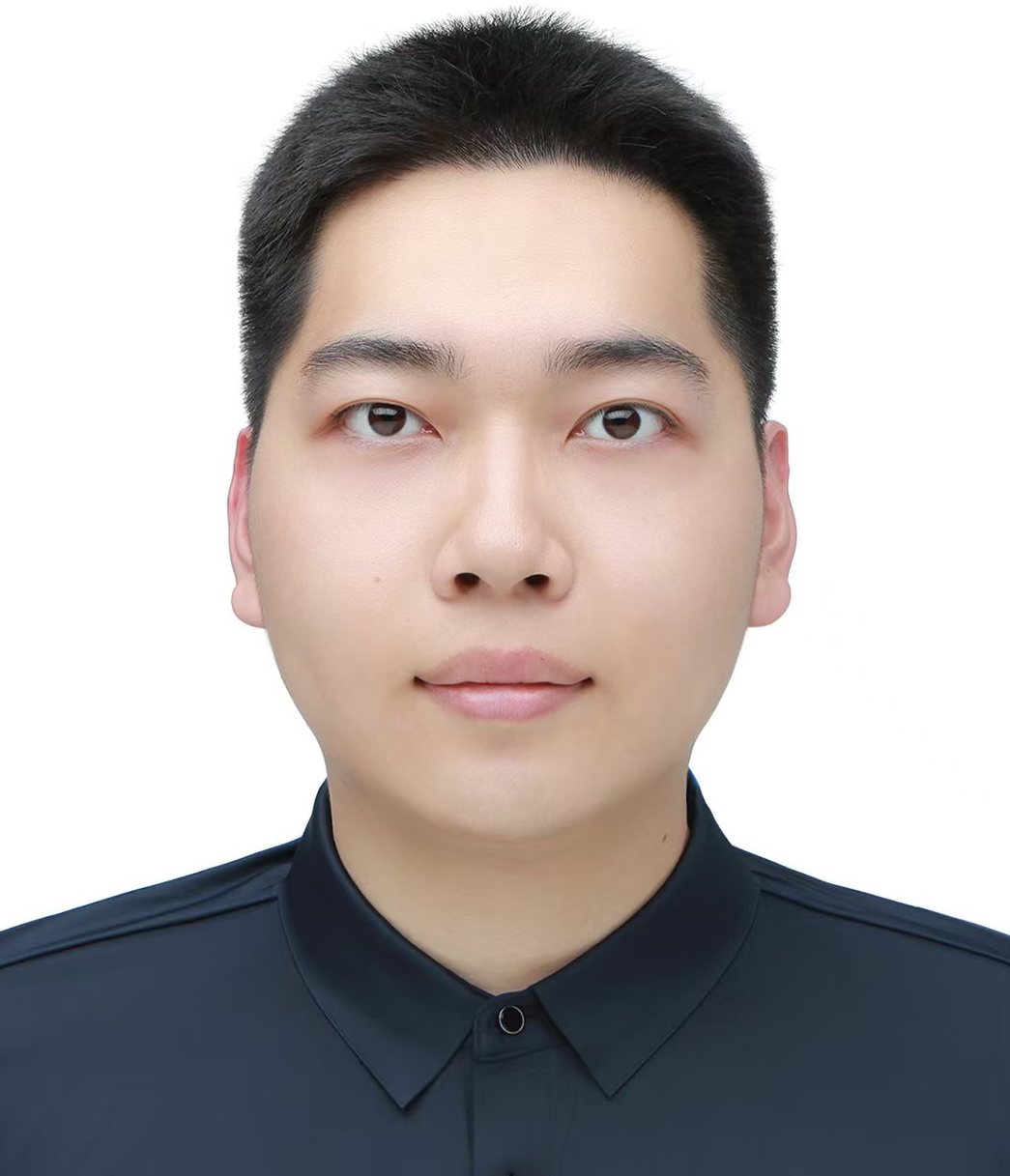}}] 
{Ge Yan} (M'25) received the PhD degree and B.S. (ACM Honored Program) degree both in Computer Science from Shanghai Jiao Tong University, Shanghai, China in 2025 and 2020, respectively. He is currently a postdoctoral researcher working in the lab led by Dr. Yuxuan Du and Professor Dacheng Tao at Nanyang Technological University, Singapore. His research interests are quantum AI and AI for quantum, with first-author papers in ICML, ICLR, NeurIPS, AAAI, SIGKDD, etc.
\end{IEEEbiography}
\vspace{-30pt}
\begin{IEEEbiography}
[{\includegraphics[width=1in,height=1.25in,clip,keepaspectratio]{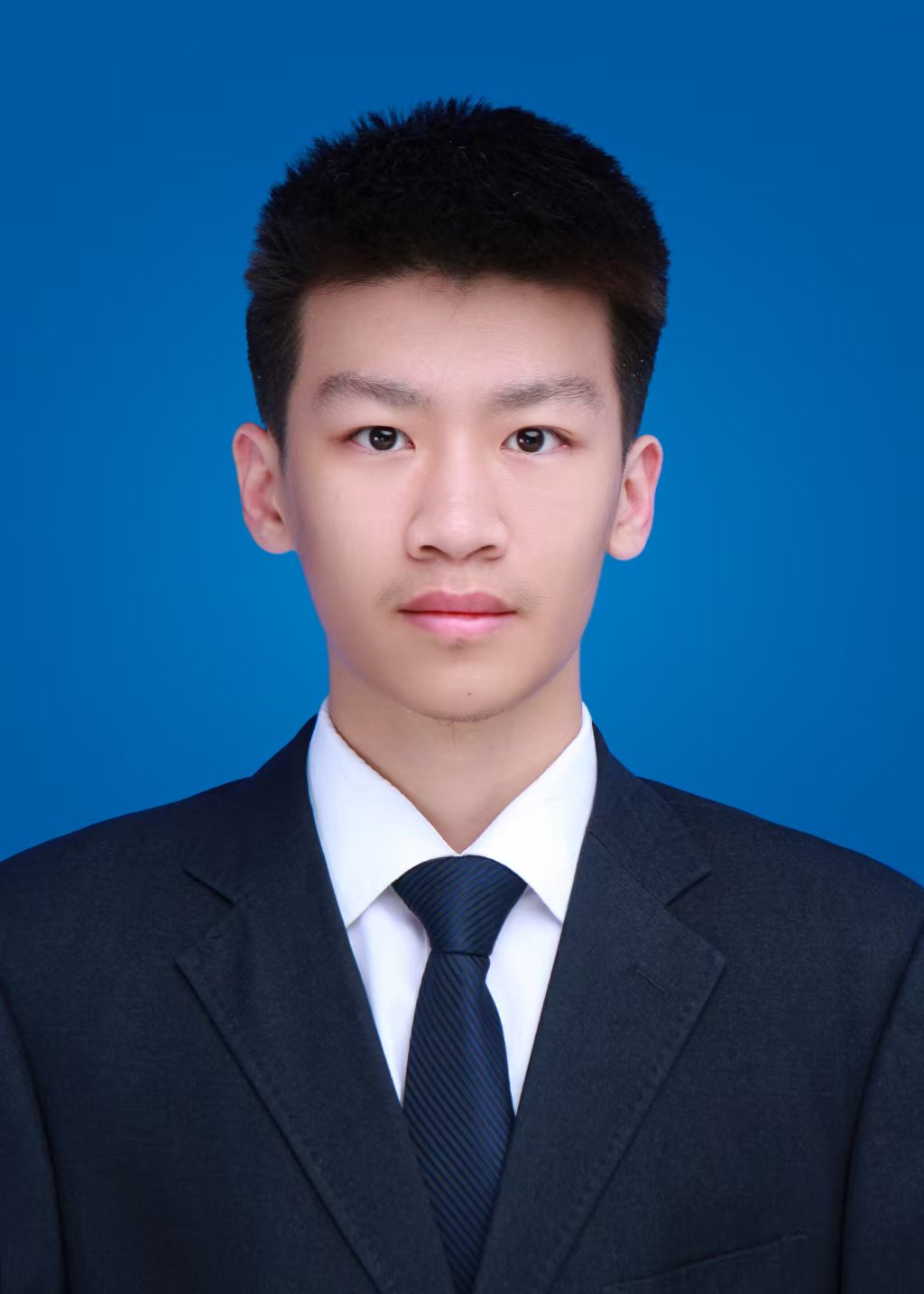}}]
{Kaisen Pan} received B.S. in Computer Science (ACM Honored Program) from Shanghai Jiao Tong University in 2024. He is currently an independent researcher. His research interests include quantum computing and quantum artificial intelligence. During his undergraduate studies, he contributed to 4 papers in top venues. He received a gold medal in the ICPC Regional Contest and a gold medal in the East Asia Final Contest.
\end{IEEEbiography}
\vspace{-30pt}
\begin{IEEEbiography}
[{\includegraphics[width=1in,height=1.25in,clip,keepaspectratio]{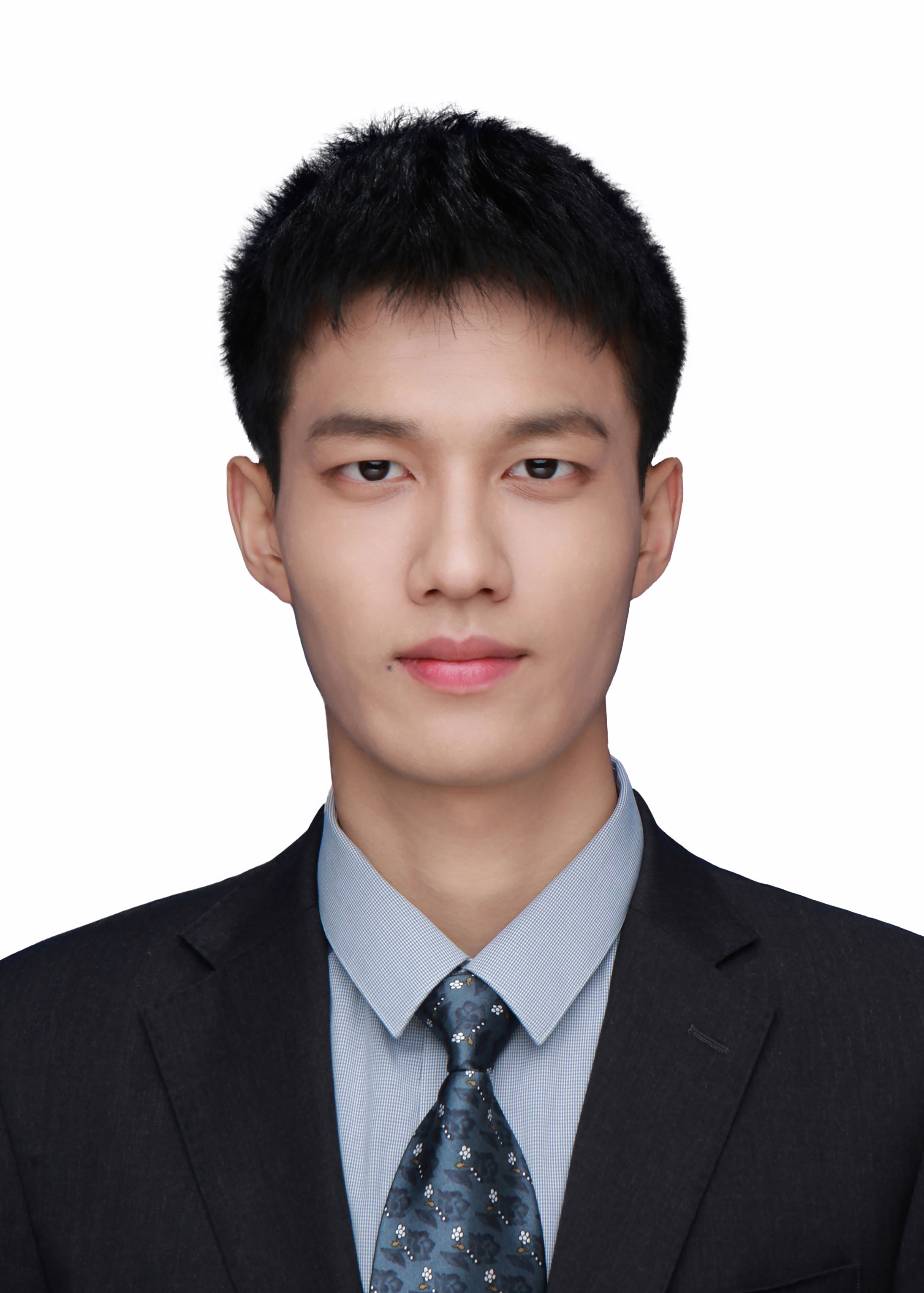}}]
{Ruocheng Wang} is a PhD student with the School of AI, Shanghai Jiao Tong University, Shanghai, China, and Shanghai Innovation Institute (SII). Before that, he received his BSc degree in Physics from the Zhiyuan College, Shanghai Jiao Tong University. His research interests include quantum algorithms, machine learning, and PDE solving. He has published several papers in venues such as ICML and NeurIPS. He was awarded the National Scholarship and Shanghai Outstanding Graduate.
\end{IEEEbiography}
\vspace{-30pt}
\begin{IEEEbiography}
[{\includegraphics[width=1in,height=1.25in,clip,keepaspectratio]{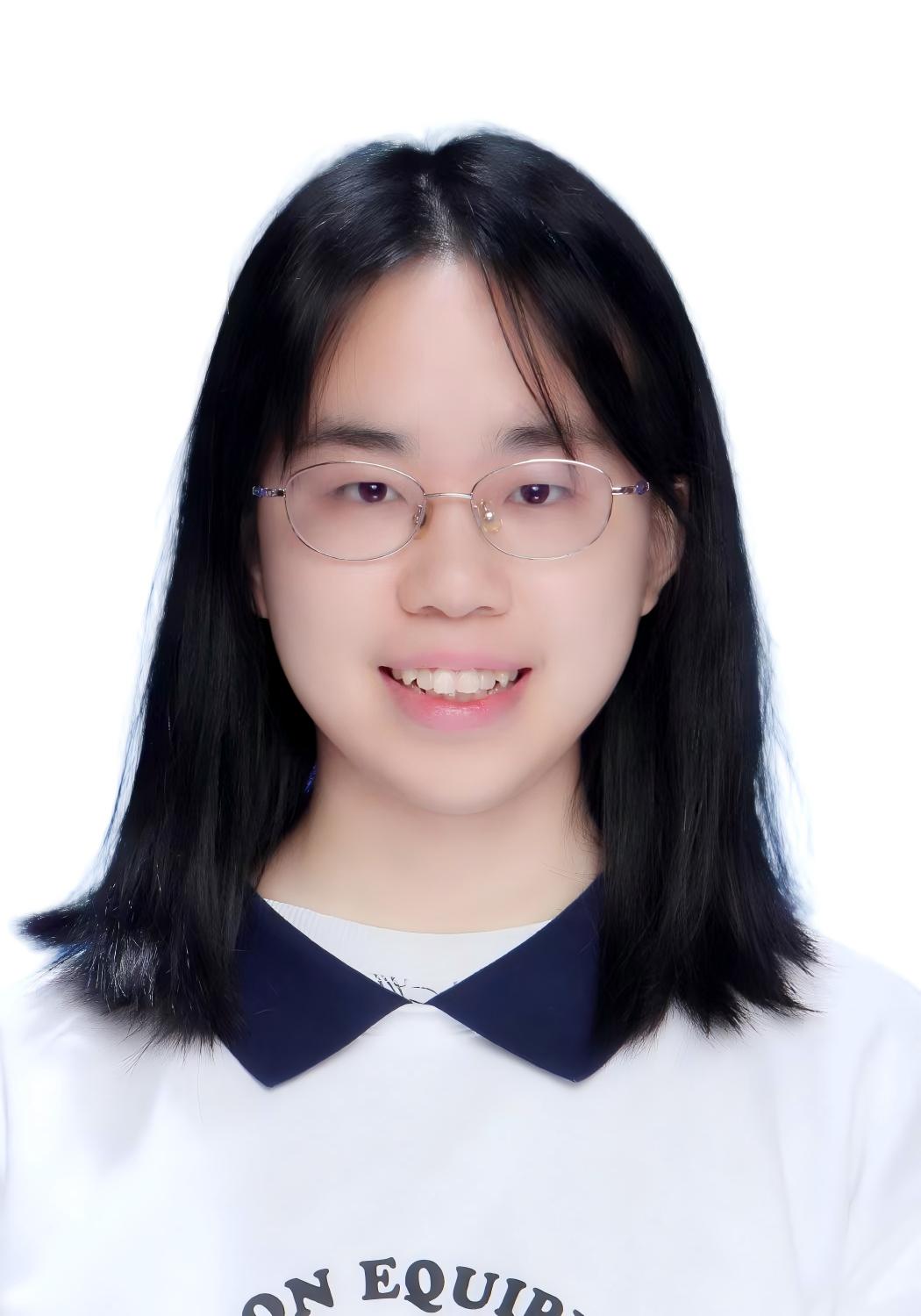}}]
{Mengfei Ran} is a PhD student with the School of Artificial Intelligence at Shanghai Jiao Tong University. Before that, she received her BSc degree in Physics from the School of Physics and Astronomy, Shanghai Jiao Tong University. Her research interests include quantum circuit compilation, quantum machine learning, and quantum algorithms for scientific computing. She has published in Top venues such as NeurIPS.
\end{IEEEbiography}
\vspace{-30pt}
\begin{IEEEbiography}
[{\includegraphics[width=1in,height=1.25in,clip,keepaspectratio]{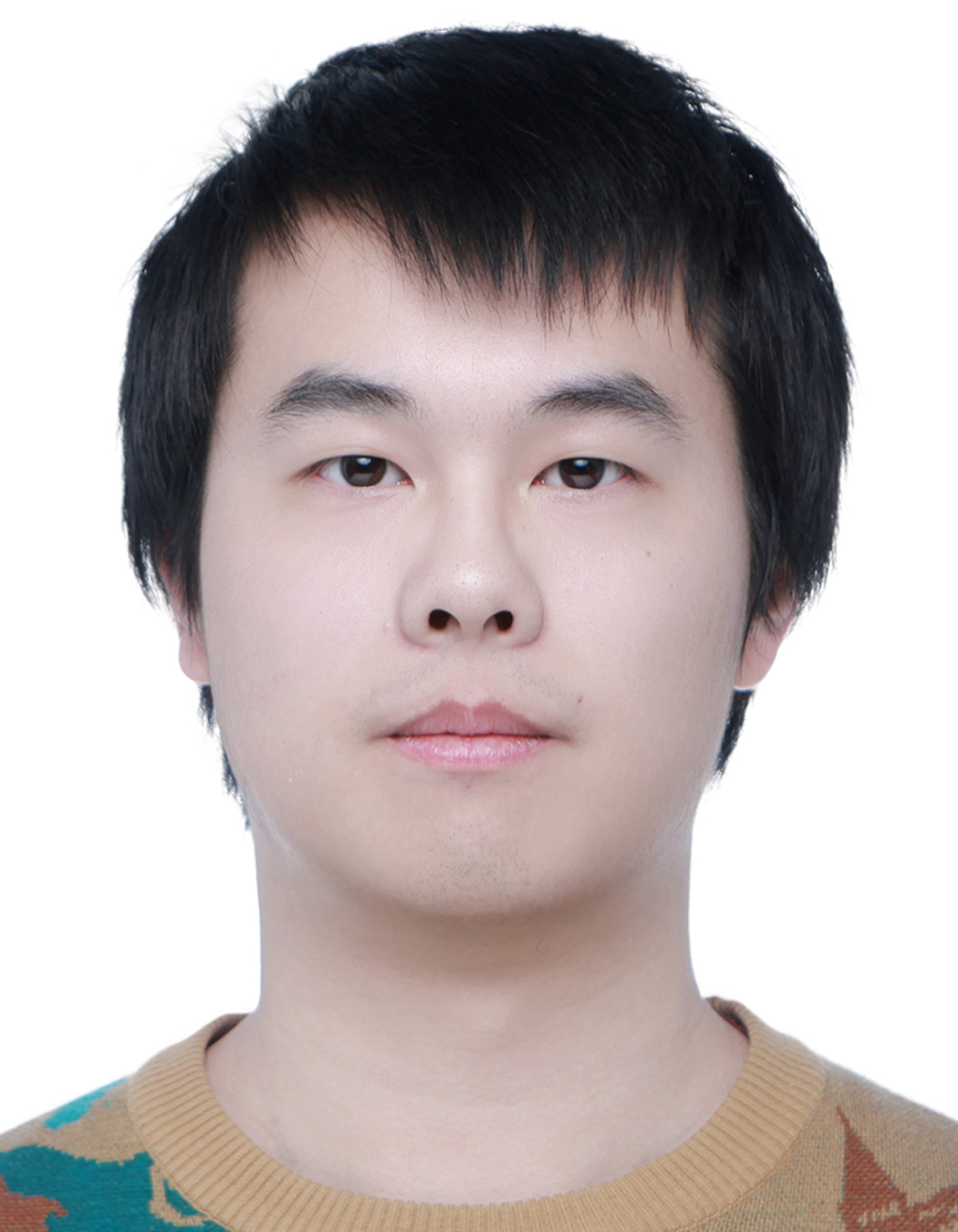}}]
{Hongxu Chen} received his B.S. in Information Engineering from Shanghai Jiao Tong University, China, in 2022, and is pursuing a Master's degree in Data and Artificial Intelligence at Institut Polytechnique de Paris, France. Previously, he was a Research Assistant at ReThinkLab, Shanghai Jiao Tong University, working on variational quantum algorithms. He completed an internship at Laboratoire d’informatique de l’École polytechnique, Institut Polytechnique de Paris, focusing on ZX-calculus. He has published in top venues such as ICLR.
\end{IEEEbiography}
\vspace{-30pt}
\begin{IEEEbiography}
[{\includegraphics[width=1in,height=1.25in,clip,keepaspectratio]{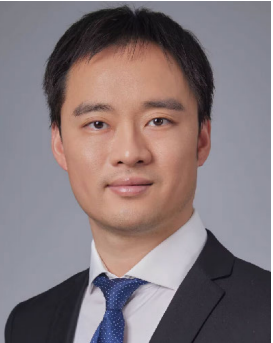}}] 
{Junchi Yan} (S'10-M'11-SM'21) is a Professor and Associate Director with School of AI, Shanghai Jiao Tong University, Shanghai, China. Before that, he was a Research Staff Member with IBM and later an affiliated consultant Researcher with AWS AI Lab. His research interests are machine learning. He is the Associate Editor for IEEE TPAMI/TNNLS/TEVC, JMLR and Pattern Recognition. He received IEEE CS AI'10 to Watch, IEEE CIS Outstanding Early Career Award, CVPR/IROS Best Paper Candidate, ACL Outstanding Paper. He is a Fellow of IAPR, and on the board of ICML and the Program Co-Chair of ACM Multimedia 2026. He has published 20+ papers in top venues in quantum AI and AI for quantum and is the featued cover author of IEEE Xplorer.
\end{IEEEbiography}

\end{document}